\newcommand{\ignore}[1]{}
\newcommand{\notinproc}[1]{}
\newcommand{\onlyinproc}[1]{#1}
\newtheorem{thm}{Theorem}[section]
\newtheorem{theorem}{Theorem}[section]
\newtheorem{lemma}[thm]{Lemma}
\newtheorem{corollary}[thm]{ Corollary}
 \newcommand{\qed}{\hfill \rule{1ex}{1ex}\medskip\\}
 \newenvironment{proof}{\paragraph{Proof}}{\qed}
\def\lth{\mbox{$\ell^{\mathrm{th}}$}}
\def\range{\mbox{{\sc rg}}}
\def\var{\mbox{\sc var}}
\def\E{{\textsf E}}
\def\vecv{\boldsymbol{v}}
\def\vecz{\boldsymbol{z}}
\begin{document}

\title{What You Can Do with Coordinated Samples}

   \ignore{
\numberofauthors{2}
\author{
\alignauthor Edith Cohen\\
       \affaddr{Microsoft Research}\\
        \affaddr{Mountain View, CA, USA}\\
       \email{edith@cohenwang.com}
\alignauthor  Haim Kaplan \\
       \affaddr{School of Computer Science}\\
       \affaddr{Tel Aviv University}\\
       \affaddr{Tel Aviv, Israel}\\
       \email{haimk@cs.tau.ac.il}
}
   }

\author{
Edith Cohen\thanks{Microsoft Research, SVC
{\tt edith@cohenwang.com}
} $^\dagger$ \and Haim Kaplan \thanks{
    The Blavatnik School of Computer Science, Tel Aviv University, Tel
    Aviv, Israel. {\tt haimk@cs.tau.ac.il}}
}

\date{}

  \ignore{
\author{
Edith Cohen\inst{1,2}
\and Haim Kaplan \inst{2}
}

\institute{Microsoft Research, SVC, USA
\email{edith@cohenwang.com}
 \and    The Blavatnik School of Computer Science, Tel Aviv University, Israel. \email{haimk@cs.tau.ac.il}}
 }


 \maketitle
\begin{abstract}
 {\small
 Sample coordination, where similar instances have similar samples, was proposed by statisticians four decades ago as
  a way to maximize overlap in repeated surveys.  Coordinated sampling
  had been since used for summarizing massive data sets.  

  The usefulness of a sampling scheme hinges on the scope and accuracy
  within which queries posed over the original data can be answered
  from the sample.  We aim here to gain a fundamental understanding of
  the limits and potential of coordination.  Our main result is a
  precise characterization, in terms of simple properties of the
  estimated function, of queries for which estimators with desirable
  properties exist.  We consider unbiasedness, nonnegativity, finite
  variance, and bounded estimates.

  Since generally a single estimator can not be optimal (minimize
  variance simultaneously) for all data, we propose {\em variance
    competitiveness}, which means that the expectation of the square on any data is not too
  far from the minimum one possible for the data.  Surprisingly
  perhaps, we show how to construct, for any function for which an
  unbiased nonnegative estimator exists, a variance competitive
  estimator. \footnote{This is the full version of a RANDOM 2013 paper}


}
\end{abstract}


\section{Introduction} \label{intro:sec}

Many data sources, including IP or Web traffic logs from different
time periods or locations, measurement data, snapshots of data
depositories that evolve over time, and document/feature and
market-basket data, can be viewed as a collection of {\em
instances}, where each instance is an assignment of numeric values from some
set $V$ to a set of items (the set of items is the same for different
instances but the value of each item changes).

When the data is too massive to manipulate or even store in full,
it is useful to obtain and work with a random sample of each instance.  
Two common sampling schemes are
Poisson sampling (each item is sampled
  independently with probability that depends only on its value)  and
  bottom-$k$ (order) sampling.
The samples 
are efficient to compute, also when instances are presented as
streams or are distributed across multiple servers. 
It is convenient to specify these sampling schemes 
 through a {\em rank function}, $r:[0,1]\times V\rightarrow \mathbb{R}$, which maps seed-value pairs to
a number $r(u,v)$ that is  non-increasing with $u$ and
non-decreasing with $v$.
For each item $h$ we draw a {\em seed}
$u(h) \sim U[0,1]$ uniformly at random and compute the rank value
$r(u(h),v(h))$, where $v(h)$ is the value of $h$.  With Poisson
sampling, item $h$ is sampled $\iff$ $r(u(h),v(h)) \geq T(h)$,
where $T(h)$ are fixed thresholds, whereas
a bottom-$k$ sample includes the $k$ items with highest ranks.\footnote{The
term bottom-$k$ is due to historic usage of the
inverse rank function and lowest $k$
  ranks~\cite{Rosen1972:successive,Rosen1997a,bottomk07:ds,bottomk:VLDB2008,CK:sigmetrics09}}
Poisson PPS
samples (Probability Proportional to Size~\cite{Hajekbook1981}, where
each item is included with probability proportion to its value) are
obtained using the rank function $r(u,v) = v/u$ and a fixed $T(h)$
across items.
 Priority (sequential Poisson)
samples~\cite{Ohlsson_SPS:1998,DLT:jacm07,Szegedy:stoc06} are
bottom-$k$ samples utilizing the PPS ranks $r(u,v) = v/u$ and successive
weighted sampling without replacement ~\cite{Rosen1972:successive,ES:IPL2006,bottomk07:ds} corresponds to bottom-$k$
samples with the rank function $r(u,v)=-v/\ln(u)$.

Samples of different instances are {\em coordinated} when
the set of random seeds $u(h)$ is shared across instances.  Scalable sharing of
seeds when instances are dispersed in time or location is facilitated through
random hash functions $u(h) \gets H(h)$, where the only requirement
for our purposes is uniformity and pairwise independence.
Figure~\ref{example1:fig} contains an example data set of two instances
and the PPS sampling probabilities of each item in each instance, 
and illustrates how to coordinate the samples.
Note that the sample of one instance does not depend on values
 assumed in other instances, which is important for scalable deployment.


\smallskip
\noindent
{\bf Why coordinate samples?}
Sample coordination was proposed
in 1972 by Brewer, Early, and  Joice~\cite{BrEaJo:1972}, as a method to
maximize overlap and therefore minimize overhead in repeated
surveys~\cite{Saavedra:1995,Ohlsson:2000,Rosen1997a}:  The values of
items change, and therefore there is a new set of PPS sampling probabilities.  
With coordination, the sample of the new instance is as similar
as possible to the previous sample, and therefore the number of items that need to be
surveyed again is minimized.
\ignore{The underlying set of weights (amount of traffic on different road
segments) may change over time, but surveys are expensive, and
therefore we want the new sample to be both a true PPS sample
according to the new weights and at the same time, overlap as much as possible
with the previous sample (in order to minimize overhead of
surveying new road segments).}  Coordination was subsequently used
to facilitate efficient processing of large data sets.
\ignore{
Coordination enables efficient sampling over unique items when there are
multiple occurrences, which is useful in stream processing and distributed settings.}
Coordinated samples of instances are used as
synopses which facilitate efficient estimation of
multi-instance functions
  such as distinct counts (cardinality of set unions), sum of maxima, and similarity
\cite{Broder:CPM00,BRODER:sequences97,ECohen6f,CoWaSu,MS:PODC06,GT:spaa2001,Gibbons:vldb2001,BCMS:ton2004,DDGR07,bottomk:VLDB2008,BHRSG:sigmod07,HYKS:VLDB2008,CK:sigmetrics09,multiw:VLDB2009}.
Estimates obtained over coordinated samples are much more accurate than possible with independent samples.
Used this way, coordinated sampling can be casted as a form of
Locality Sensitive Hashing (LSH)
\cite{IndykMotwani:stoc98,GIM:vldb99,indyk:stable}.
Lastly, coordinated samples 
can sometimes be obtained much more efficiently than independent
samples.  One example is computing samples of
the $d$-neighborhoods of all nodes in a graph
\cite{ECohen6f,CoKa:jcss07,MS:PODC06,bottomk07:ds,bottomk:VLDB2008,ECohenADS:2013}.
Similarity queries between neighborhoods are 
useful in the analysis of 
massive graph datasets such as social networks or Web graphs.
\ignore{
Broder \cite{Broder:CPM00,BRODER:sequences97} used coordinated samples
of features in documents to estimate Jaccard similarity and quickly identify similar
documents .  Gibbons~\cite{Gibbons:vldb2001} used coordination  for distinct counts
and Gibbons and Tirthapura~\cite{GT:spaa2001} looked at L$_1$
and union estimates.
 In \cite{relations:CK08,CK:sigmetrics09,
   multiw:VLDB2009} we provided generic constructions
 of estimators for basic multi-instance functions, including quantiles sums and
 $L_1$ difference.  
}

 Our aim here is to study the potential and limitations of estimating multi-instance functions
from coordinated samples of instances.
The same set of samples can be used to estimate multiple
queries, including queries that are specified only after 
sampling is performed.  The sample may also be primarily used
ro estimate subset statistics over one instance as a time, such as sums, averages,
and moments.  While we focus here on queries that span multiple
instances,
we keep this in mind, and  therefore do not aim for a 
sampling scheme optimized for a particular query (although some of our
results can be applied this way), but rather, to 
 optimize the estimator given the sampling scheme and query.



\begin{figure}
{\small
\begin{tabular}{l|| llllllll }
 items: &  1  &  2   &   3  &   4   & 5   &  6  & 7  &  8  \\
\hline
Instance1: &  1  & 0    &  4   & 1   & 0   & 2   & 3   & 1 \\
Instance2: &   3 & 2   &  1   &  0  &  2  & 3   &  1  &  0 \\
\hline
 \multicolumn{9}{c}{PPS sampling probabilities for T=4 (sample of
   expected  size 3):}\\
\hline
Instance1: &  0.25   & 0.00    &  1.00   & 0.25   & 0.00   & 0.50   &  0.75   & 0.25 \\
Instance2: &   0.75 &  0.50   &  0.25   &  0.00  &  0.50  & 0.75   &  0.25  &  0.00 
\end{tabular}}
\caption{Two instances with 8 items and 
  respective PPS sampling probabilities for threshold value $4$, so
item with value $v$ is sampled with
 probability $\min\{1, v/4\}$.
   To obtain two coordinated
  PPS samples of the instances, we associate an independent $u(i)\sim U[0,1]$ with each item $i\in
  [8]$.  We then sample $i\in [8]$ in instance $h\in [2]$ 
if and only if $u(i) \leq v_h(i)/4$, where $v_h(i)$ is the value of $i$ in
instance $h$.
 When coordinating the samples this way, we make them as similar as
  possible. In the example,
 item $1$ will always (for any drawing of seeds)
 be sampled in instance $2$ if
  it is sampled in instance $1$ and vice versa for item $7$. \label{example1:fig}}
\end{figure}

\smallskip
\noindent
{\bf Sum aggregates:}
Most queries in the above examples can be casted as
{\em sum aggregates}
over selected items $h$ of an {\em item function} $f(\vecv)$ applied to the item
weight tuple in different instances $\vecv(h)=(v_1(h),v_2(h),\cdots)$.
In particular, distinct count (set union) is a sum
aggregate of $OR(\vecv)$, max-sum aggregates $\max(\vecv)=\max_i v_i$, min-sum aggregates $\min(\vecv)=\min_i v_i$, and
$L_p^p$ ($p$th power of L$_p$-difference) is the sum aggregate of the 
exponentiated range function
$\range_p(\vecv)=|\max(\vecv)-\min(\vecv)|^p$.
In our example of Figure~\ref{example1:fig},
$L_2^2$ of items $[4]$ is $(1-3)^2 +(2-0)^2 +(4-1)^2 + (1-0)^2 =
18$, and is computed by summing the item function
$\range_2(v_1,v_2)=(v_1-v_2)^2$ over these items.  The $L_1$ of items
$\{1,3\}$ is $|1-3|+ | 4-1|=5$, using the item function
$\range(v_1,v_2)=|v_1-v_2|$, and the
max-sum 
of items $\{6,7,8\}$ is $\max\{2,3\}+\max\{3,1\}+ \max\{1,0\}=7$,
which uses the item function $\max\{v_1,v_2\}$.
Other queries in our examples which are not in sum aggregate form can be approximated
well by sum aggregates: The Jaccard similarity is a ratio
of min-sum and max-sum, and thus the ratio of min-sum estimate
with small additive error and a  max-sum estimate with
small relative error~\cite{multiw:VLDB2009} is a good
Jaccard similarity estimator.
  The $L_p$ difference is the $p$th root of $L_p^p$, and can be estimated well
by taking the $p$th root of a good estimator for $L_p^p$, which is a
sum aggregate of $\range_p$.
 When the domain of query results is nonnegative, as is the case in
 the examples,  the common practice, which we follow,  is to restrict the estimates, which often are plugged in instead of
 the exact result, to be nonnegative as well.


\smallskip
\noindent
{\bf Sum estimators -- one item at a time:}
To estimate a sum aggregate, we can use a linear
estimator which is the sum of single-item estimators,
estimating the item function $f(\vecv(h))$ for each selected item $h$.
We refer to such an estimator as a {\em sum estimator}.
When the single-item estimators are unbiased, from linearity of
expectation, so is the sum estimate.  When 
the single-item estimators are unbiased and sampling of different
items is pairwise
independent (respectively, negatively correlated, as with bottom-$k$ sampling), the variance of the
sum is (resp., at
most) the sum of variances of the
single-item estimators.  Therefore,
the relative error of the sum estimator decreases with the number of
selected items we aggregate over.
We emphasize  that unbiasedness of the single-item estimators
(together with pairwise independence or negative correlations between items)
is critical for good estimates of the sum aggregate
since a variance component that is due to bias ``adds up'' with
aggregation whereas otherwise the relative error ``cancels out'' with aggregation.
\footnote{More concretely, 
the sample of any particular item is likely to have all or
most entries missing, in which case,
the variance of any nonnegative unbiased single-item estimator is 
$\Omega((1/p)f(\vecv)^2)$, where $p$ is the probability that the outcome reveals
``enough'' on $f(\vecv)$.
In such a case, a fixed $0$ estimate (which is biased) will have
variance $f(\vecv)^2$, which is lower than 
the variance of any unbiased estimator when $p$ is small,
but also clearly useless and will result in large error for the
sum aggregate.
To understand this more precisely, let $p$ be the probability that the
outcome provides a lower bound of at least $f(\vecv)/2$. To
be nonnegative, the contribution to the expectation from the other
$(1-p)$ portion of outcomes can not excede
$f(\vecv)/2$, so the remaining contribution of the $p$ portion of
outcomes must be at least $f(\vecv)/2$, giving the lower bound on the
variance.
}
\ignore{
The point of requiring unbiasedness is that it allows for meaningful single-item
estimates  where  the error cancels out.
(We note that we do not insist on unbiasedness of aggregate estimators:   the Jaccard similarity and $L_p$ estimators (when $p\not=
1$) obtained through sum aggregates as explained above are biased even when the estimates for the
underlying sum aggregates are not, but these are the ``end products'' of
our estimation so
we only worry about bias  in terms of its contribution
to the relative error of the estimate.)

}
 The Horvitz-Thompson (HT) estimator~\cite{HT52} is a
classic sum estimator which is 
unbiased and nonnegative.
To estimate $f(v)$, the HT estimator outputs $0$ when the
value is not sampled and the inverse-probability estimate $f(v)/p$
when the value is sampled, where $p$ is the sampling probability.
The HT estimator
is applicable to some multi-instance functions \cite{multiw:VLDB2009}.
\ignore{When there is only one instance, the estimator is applied to one entry at 
a time, and outputs an estimate of $f(v)/p$ when the
entry is sampled and $0$ otherwise (when entry is sampled we know $v$
and therefore can deduce the probability $p=\sup_{u\in (0,1]} r(u,v)
\geq T(h)$ that the entry is sampled and thus compute the HT
estimate.).  The HT estimator has minimum variance for each entry
and therefore (on a single instance) is the sum estimator with minimum
variance for all data.}




 From here on, we restrict our attention to estimating item functions $f(\vecv)\geq 0$
where each entry of $\vecv$ is Poisson sampled and focus on unbiased
and nonnegative estimators for $f(\vecv)$. 
We provide the model in detail in Section \ref{model:sec}.
See Appendix
\ref{sumcase:sec} for further discussion on using sum estimators with
bottom-$k$ samples.

\smallskip
\noindent
{\bf The challenge we address:}
Throughout the 40 year period in which coordination was used,
estimators were developed in an ad-hoc manner,
lacking a fundamental understanding of the potential and
limits of the approach.  
Prior work was mostly based on adaptations of the HT estimator for
multiple instances.
The HT estimator  is applicable 
provided that for any $\vecv$
where $f(\vecv)>0$, there is a positive probability for an outcome
that both reveals $f(\vecv)$ and allows us to determine a
probability $p$ for such an outcome.  These conditions are satisfied 
by some basic  functions including $\max(\vecv)$ and
$\min(\vecv)$.  
 There are functions, however, for which the HT estimator is not applicable, but nonetheless, for which nonnegative
and unbiased estimators exist.  Moreover, the HT estimator may not be optimal
even when it is applicable.

As a particular example,  the only $L_p$  difference for which a ``satisfactory'' estimator was
known was the $L_1$ difference \cite{multiw:VLDB2009}.
Stated in terms of estimators for item functions, prior to our work, there was no unbiased and
nonnegative estimator known for $\range_p(\vecv)=|\max(\vecv)-\min(\vecv)|^p$ for any $p\not=1$.  For 
$p=1$, the known estimator used the relation $\range(\vecv)=\max(\vecv)-\min(\vecv)$, separately estimating the maximum and the
minimum and 
showing that when samples are coordinated then the estimate for the
maximum is always at least as large as the one for the minimum and
therefore the difference of the estimates is  never negative.  
But even for this $\range(\vecv)$ estimator,  there was no understanding whether it is
``optimal'' and more so, what optimality even means in this context.
Moreover, the ad hoc construction of this estimator does not extend
even to slight variations, like 
$\max\{v_1-v_2,0\}$, which sum-aggregates to the natural 
``one sided'' $L_1$ difference.

\subsection*{ Contributions highlights}
 
\smallskip
\noindent
{\bf Characterization:}
 We provide 
a complete characterization, in terms of simple properties of the
 function $f$ and the sampling scheme parameters, of when estimators with
the following combinations of properties exist for $f$:
\begin{trivlist}
\item $\bullet$
unbiasedness and nonnegativity.
\item $\bullet$
unbiasedness, nonnegativity, and {\em finite variances}, which means that
for all $\vecv$, the variance given data $\vecv$ is finite.
\item $\bullet$
unbiasedness,
nonnegativity, and
{\em bounded estimates}, which means that for each $\vecv$,
 there is an upper bound on all estimates that can be 
obtained when the data is $\vecv$.  
Bounded estimates implies finite variances, but not vice versa.
\end{trivlist}

\smallskip
\noindent
{\bf The J estimator}
Our characterization utilizes a construction of an estimator, which we
call {\em the J estimator}, which we show has the following properties:
 The J estimator
 is unbiased and nonnegative if and only if an unbiased
 nonnegative estimator exists for $f$.  The J estimator has a finite
 variance for data $\vecv$ or is bounded for data $\vecv$ if and
 only if an (nonnegative unbiased) estimator with the respective property for $\vecv$  exists.

\smallskip
\noindent
{\bf Variance competitiveness:}
Ideally, we would like to have a single estimator, which
  subject to desired properties of the estimator, minimizes the variance for all data.
Such estimators are known in the statistics literature as
UMVUE (uniform minimum
  variance unbiased) estimators \cite{surveysampling2:book}.
Generally however, an (unbiased, nonnegative, linear) estimator with
  minimum variance on all data vectors \cite{Lanke:Metrica1973} may
  not exist.    
Simple examples show that this is also the case for our coordinated
  sampling model even when restricting our attention to particular natural functions.

  We are therefore introducing and aiming for a notion of {\em variance competitiveness}, which means
that for {\em any} data vector, the variance of our estimator is not ``too
far'' from the minimum variance
possible for that vector by a nonnegative unbiased  estimator.
More precisely, 
an estimator is
$c$-competitive if for all data $\vecv$, the expectation of its square is
within a factor of $c$ from the minimum possible for $\vecv$ by an
estimator that is unbiased and nonnegative on all data.
Variance competitiveness bridges the
  gap between the ideal UMVUE estimators (which generally do not exist)
and the practice of estimator selections with no
  ``worst-case''  guarantees.

\smallskip
\noindent
{\bf $\vecv$-optimality:}
To study competitiveness, we need to compare the variance on each data
vector to the minimum possible, and to do so, we need to be able to 
express the ``best possible'' estimates.
  We say that an estimator is {\em $\vecv$-optimal} if
 amongst all estimators that are unbiased and nonnegative on all data,
 it minimizes variance for the data $\vecv$.  
  We express the $\vecv$-optimal estimates, which are
the values a $\vecv$-optimal estimator assumes on outcomes that are
consistent with  data $\vecv$, and the respective
$\vecv$-optimal variance.  We show that the $\vecv$-optimal estimates
 are uniquely defined (almost everywhere).
 The $\vecv$-optimal estimates, however, are inconsistent for
 different data since as we mentioned earlier, it is not generally possible to
 obtain a single unbiased nonnegative estimator that minimizes
 variance for all data vectors.  They do allow us, however, to analyse
the competitiveness of estimators, and in particular, that of the J
estimator.  

\smallskip
\noindent
{\bf Competitiveness of the J estimator:}
We show that the J estimator is competitive.  In particular, this
shows the powerful and perhaps 
surprising
 result that whenever for any particular data vector $\vecv$ there
 exists an estimator with finite variance that is 
 nonnegative
and unbiased on all data, then there is a {\em single} estimator,
that for all data, has expectation of the square that is $O(1)$ of the minimum possible.

\ignore{
\smallskip
\noindent
{\bf Related work:}
The scope of our work is to understand what we can get from samples
but we mention that there are other methods to summarize data that
support approximate queries.
The LSH framework uses
{\em sketches} of instances to approximate queries
such as
 L$_1$ difference over data streams~\cite{FKSV:99},  L$_p$
difference over vectors ~\cite{DIIM:socg2004}, 
and other similarity metrics~\cite{Charikar:stoc02}.
Sketches (that are not samples) can 
provide more accurate results to the specific
queries they are designed to answer 
but unlike samples, even a simple modification of
the query such as truncating the contribution of large values or
confining the query  to a selected subset of the items
 typically can not be efficiently supported in retrospect.
Examples of natural subset queries are daily difference in IP flow volumes
over flows {\em originating from a certain AS}'' or
difference between users in California and Texas 
of number of downloads of different
{\em election-related you-tube videos}.


We recently studied estimation over independent samples of instances \cite{CK:pods11},
where the main application was set union and quantile sum estimates.
In a sense, coordinated and independent sampling are the two
interesting extremes of the joint distribution.   Our current work
treats this other extreme, building on
the approach we initiated in \cite{CK:pods11}.
 The simpler structure of coordinated sampling enables us to gain  a precise and more
complete understanding, which we hope can guide us to better answers for
independent samples.
}

\section{Coordinated Shared-Seed Sampling for a Single Item} \label{model:sec}
\medskip
\noindent

Using the terminology in
the introduction,  we are now looking at a single item, its values 
$\vecv=(v_1,\ldots,v_r)$ in different instances, the sampling scheme projected on this
one item, and estimating the item function $f$.

We denote that data domain by ${\bf V}\subseteq \mathbb{R}^r_{\geq 0}$.  
The {\em sampling scheme} is specified by non-decreasing
continuous  functions
$\mbox{\boldmath{$\tau$}}=(\tau_1,\ldots,\tau_r)$ on $[0,1]$ such that
the infimum of the range of $\tau_i$ is at most $\inf_{\vecv\in {\bf
    V}} v_i$.

 The sampling of $\vecv$ is performed by drawing a random value
$u\sim U[0,1]$, which we refer to as the {\em seed}.  We then include
 the $i$th entry in the outcome $S$ if and only if $v_i$ is at least $\tau_i(u)$:
$$i\in S\,  \iff \, v_i\geq \tau_i(u) \ .$$
We use the notation $S(u,\vecv)$  for the outcome of the sampling
scheme applied to data $\vecv$ and seed $u$.

  Sampling is PPS if $\tau_i(u)$ are linear functions:  there is a fixed vector
$\mbox{\boldmath{$\tau^*$}}$ such that $\tau_i(u) \equiv u \tau^*_i$, in which case
entry $i$ is included with probability
$\min\{1,v_i/\tau^*_i\}$.    Our use of the term PPS in this
single-item context is compatible with PPS 
sampling of each instance $i$ using threshold $\tau^*_i$.  
In this case, we look at the ``projected'' sampling scheme
on the $i$th entry of our item's tuple.

Observe that our model assumes {\em weighted} sampling, where the probability that an
entry is sampled depends (and is
non-decreasing) with its
value.   Transiting briefly back to sampling of instances, weighted
sampling results in more accurate estimation of quantities (such as
averages of sums) where larger values contribute more.
It is also important for
boolean domains (${\bf V}=\{0,1\}^r$) when most items have $0$ values
 and in this case, enables us to sample only ``active'' items.

  We assume that the seed $u$ and
the functions $\mbox{\boldmath{$\tau$}}$ are available to the
estimator, and in particular,  treat the seed as provided with the
outcome.
  When an entry is sampled, we know its value and also can compute
the probability that it is sampled.  When an entry is not sampled,
we know that its value is at most $\tau_i(u)$ and we can compute this
upper bound from the seed $u$ and the function $\tau_i$.
Putting this information together, for each outcome $S(u,\vecv)$, we can define
 the set $V^*(S)$  of all data vectors consistent with the outcome.
 This set captures all the information we can glean from the sample on
 the data.
{\small
$$V^*(S)\equiv V^*(u,\vecv) =\{\vecz \mid \forall i\in [r], \left(i\in S\wedge   z_i=v_i\right)
 \, \vee \, \left(i\not\in S \wedge z_i< \tau_i(u)\right)   \}\ .$$
}

\smallskip
\noindent
{\bf Structure of the set of outcomes.}
From the outcome, which is the set of sampled entries and the seed $\rho$, 
we can determine $V^*(u,\vecv)$ also
for all $u\geq \rho$.  We also have that for all $u\geq \rho$ and  $\vecz\in
V^*(\rho,v)$, $V^*(u,\vecz)=V^*(u,\vecv)$.
Fixing $\vecv$,  the sets $V^*(u,\vecv)$ are
non-decreasing with $u$
and the set $S$ of
sampled entries is non-increasing, 
meaning that $V^*(u,\vecv) \subset
V^*(\rho,\vecv)$
and  $S(u,\vecv) \supset S(\rho,\vecv)$ 
 when $u<\rho$.

The containment order of the sets $V^*(S)$ is a tree-like partial
order on outcomes.
 For two outcomes, the sets $V^*(S)$ are either disjoint, and
unrelated in the containment order, or one is fully
contained in another, and succeeds it in the containment order.
The outcome $S(u,\vecv)$ precedes
$S(\rho,\vecv)$ in the containment order if and only if $u>\rho$.  When $V$ is finite,
the containment order is a tree order, as shown in Figure~\ref{vstar_tree:fig}.

\begin{figure} 
\centerline{
\ifpdf
\includegraphics[width=0.3\textwidth]{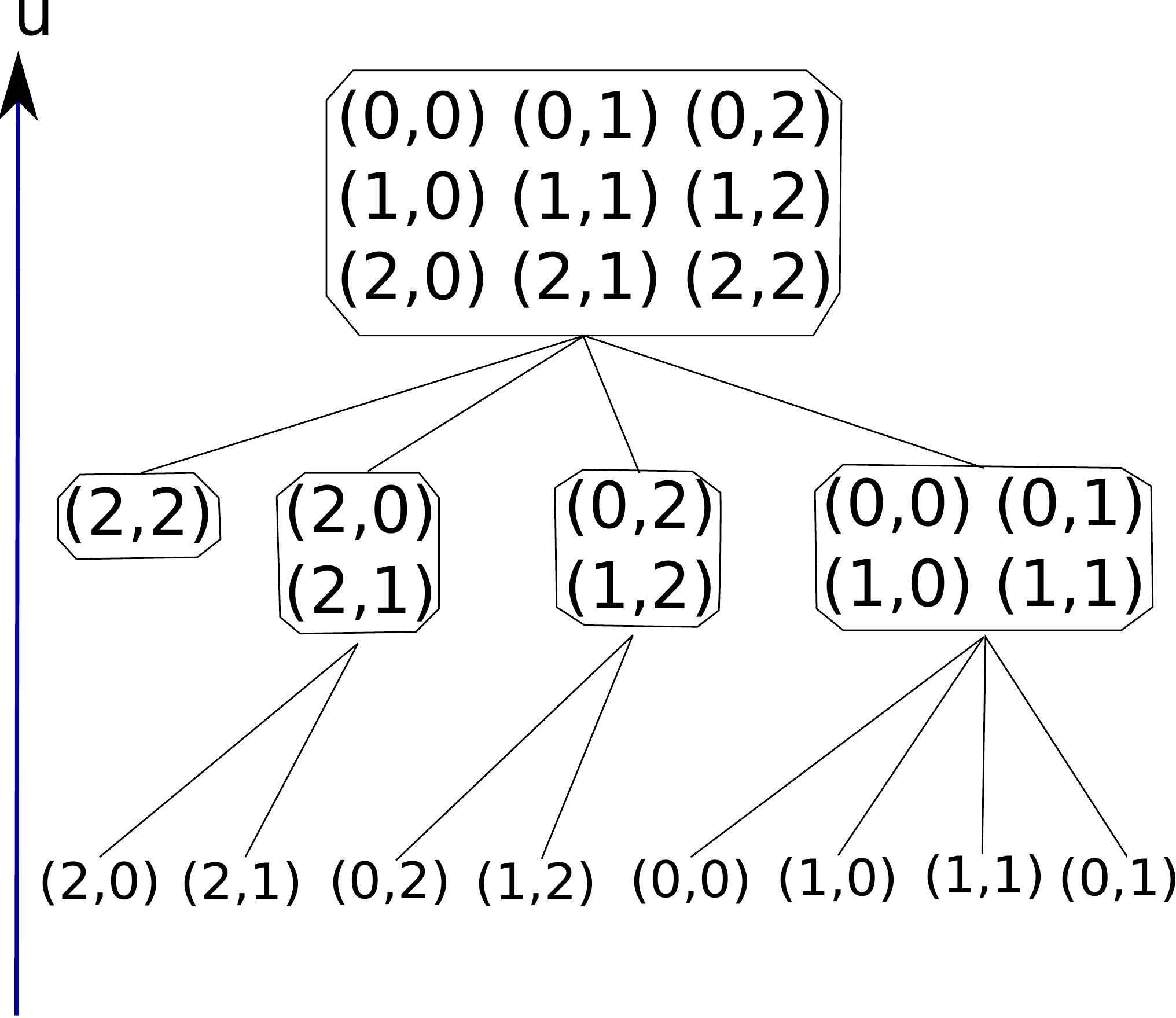}
\else
\fi
}
\caption{Illustration of the containment order on all possible
  outcomes $V^*(S)$. Example has data domain ${\bf V}=\{0,1,2\}\times
  \{0,1,2\}$ and seed mappings $\tau_1=\tau_2 \equiv \tau$.  The root
  of the tree corresponds to outcomes with
$u\in(\tau^{-1}(2),1]$. In this case, the outcome reveals no
information on the data and $V^*(S)$ contains all vectors in ${\bf V}$.
 When $u\in (\tau^{-1}(1),\tau^{-1}(2)]$ the
  outcome identifies entries in the data that are equal to ``2''.  When $u\in
  (0,\tau^{-1}(1)]$, the outcome reveals the data vector.\label{vstar_tree:fig}}
\end{figure}

For $\vecz$ and $\vecv$, the set of all $u$ such that $\vecz \in
S(u,\vecv)$, if not empty,  is a suffix of $(0,1]$ that is open to
the left.
\begin{lemma} \label{openset:lemma}
\begin{eqnarray*}
\lefteqn{\forall \rho\in (0,1]\ \forall \vecv}\\
&& \vecz \in V^*(\rho,\vecv) \implies
\exists \epsilon > 0,\ \forall x\in (\rho-\epsilon,1],\ \vecz\in
V^*(x,\vecv)
\end{eqnarray*}
\end{lemma}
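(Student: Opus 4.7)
The plan is to split the interval $(\rho-\epsilon,1]$ at $\rho$ and handle the two sides separately. For $x\in(\rho,1]$, the conclusion $\vecz\in V^*(x,\vecv)$ is free: the preceding paragraph in the paper records that $V^*(u,\vecv)$ is non-decreasing in $u$, so $V^*(\rho,\vecv)\subseteq V^*(x,\vecv)$ whenever $x\geq \rho$. Thus it suffices to find $\epsilon>0$ such that $\vecz\in V^*(x,\vecv)$ for every $x\in(\rho-\epsilon,\rho]$.

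I will argue coordinate by coordinate. Fix $i\in[r]$ and split according to whether $i\in S(\rho,\vecv)$. If $v_i\geq\tau_i(\rho)$, then since $\tau_i$ is non-decreasing we have $\tau_i(x)\leq\tau_i(\rho)\leq v_i$ for every $x\leq\rho$, so $i\in S(x,\vecv)$; the condition $z_i=v_i$ is inherited from $\vecz\in V^*(\rho,\vecv)$ without needing to shrink $\epsilon$. If instead $v_i<\tau_i(\rho)$, then from $\vecz\in V^*(\rho,\vecv)$ we also have $z_i<\tau_i(\rho)$. Set $\delta_i=\tau_i(\rho)-\max(v_i,z_i)>0$. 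Continuity of $\tau_i$ at $\rho$ yields $\epsilon_i>0$ such that $|\tau_i(x)-\tau_i(\rho)|<\delta_i$ for all $x\in(\rho-\epsilon_i,\rho]$, which in combination with monotonicity gives $\tau_i(x)>\max(v_i,z_i)$ on that interval. In particular $v_i<\tau_i(x)$, so $i\notin S(x,\vecv)$, and $z_i<\tau_i(x)$, so the non-sampled consistency clause holds as well.

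Finally I take $\epsilon=\min_{i\in[r]}\epsilon_i$, which is positive because $r$ is finite (coordinates in the first case can be given $\epsilon_i=1$, or simply excluded from the minimum). For every $x\in(\rho-\epsilon,\rho]$ and every coordinate $i$, one of the two defining clauses of $V^*(x,\vecv)$ holds, so $\vecz\in V^*(x,\vecv)$. There is no real obstacle here: the proof is essentially an application of the continuity hypothesis on $\mbox{\boldmath{$\tau$}}$, and the only subtlety worth stressing is that for non-sampled coordinates the inequality $z_i<\tau_i(\rho)$ is strict, which is exactly what allows continuity to buy a small left neighborhood.
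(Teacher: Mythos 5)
Your proof is correct and follows essentially the same route as the paper's: handle $x\geq\rho$ by the stated monotonicity of $V^*(\cdot,\vecv)$, and for $x<\rho$ use continuity of $\tau_i$ on the coordinates with $\max\{v_i,z_i\}<\tau_i(\rho)$ to extract a uniform $\epsilon=\min_i\epsilon_i$. The only difference is that you make explicit the (trivial) check for coordinates in $S(\rho,\vecv)$, which the paper leaves implicit.
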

\begin{proof}
Correctness for all $x\in [\rho,1]$ follows from the structure of the
set of outcomes:  Since $V^*(x,\vecv) \supset V^*(\rho,\vecv)$ for
all $x\geq \rho$ then $\vecz\in V^*(\rho,\vecv) \implies \vecz\in V^*(x,\vecv) $.

 Consider now the set $S$ of entries that satisfy $v_i\geq
 \tau_i(\rho)$.
Since $\vecz\in V^*(\rho,\vecv)$, we have $\forall i\in S,
z_i=v_i$
and $\forall i\not\in S,\ \max\{z_i,v_i\}< \tau_i(\rho)$. Since
$\tau_i$ is continuous and monotone,  for all $i\not\in S$,  there
must be an $\epsilon_i >0$
such that $\tau_i(\rho-\epsilon_i)> \max\{z_i,v_i\}$.
We now take $\epsilon = \min_{i\not\in S} \epsilon_i$ to conclude the proof.\onlyinproc{\qed}
\end{proof}

\section{ Estimators and properties}
Let  $f:{\bf V}$ be a function mapping ${\bf V}$ to the nonnegative reals.
An {\em estimator} $\hat{f}$ of $f$ is a numeric function applied
to the outcome.   We use the notation $\hat{f}(u,\vecv)\equiv
\hat{f}(S(u,\vecv))$.
On continuous domains,  an estimator must be (Lebesgue) integrable.
  An estimator is {\em fully  specified} for $\vecv$ if specified on 
a set of outcomes that have
probability $1$ given data $\vecv$.
Two estimators $\hat{f}_1$ and $\hat{f}_2$
are {\em equivalent} if for all data $\vecv$, $\hat{f}_1(u,\vecv)=\hat{f}_2(u,\vecv)$ with probability $1$.
\ignore{
From  the Lebesgue differentiation theorem,
\begin{lemma} \label{lebesguediff}
Two estimators $\hat{f}_1$ and $\hat{f}_2$ are equivalent if and only
if
\begin{equation*}
\forall \vecv \forall \rho\in (0,1],
\lim_{\eta\rightarrow \rho^{-}} \frac{\int_\eta^\rho \hat{f}_1(u,\vecv)du}{\rho-\eta}=
\lim_{\eta\rightarrow \rho^{-}} \frac{\int_\eta^\rho \hat{f}_2(u,\vecv)du}{\rho-\eta}
\end{equation*}
\end{lemma}
}

An estimator $\hat{f}$ is
 {\em nonnegative}  if $\forall S,\ \hat{f}(S)\geq 0$ and is
{\em unbiased} if $\forall  \vecv,\ \E[{\hat f} | \vecv]=f(\vecv)$.
An estimator has
{\em finite variance} on $\vecv$ if $\int_0^1 \hat{f}(u,\vecv)^2
du < \infty$ (the expectation of the square is finite) and is  {\em bounded} on $\vecv$ if $\sup_{u\in (0,1]}
  \hat{f}(u,\vecv) < \infty$.  If a nonnegative estimator is bounded
  on $\vecv$, it also
  has finite variance for $\vecv$.
We say that an estimator is bounded or has finite variances 
if the respective property holds for all $\vecv\in {\bf V}$.

\smallskip
\noindent
{\bf $\vecv$-optimality.}
We say that an unbiased and nonnegative estimator is {\em $\vecv$-optimal},
that is, optimal with respect to a data vector $\vecv$,
if it has minimum variance for $\vecv$.
We refer to the estimates that a $\vecv$-optimal estimator assumes on
outcomes consistent on data $\vecv$ as the {\em $\vecv$-optimal estimates}
and to the minimum variance attainable for $\vecv$ as the {\em $\vecv$-optimal
variance}.

\smallskip
\noindent
{\bf Variance competitiveness.}
An estimator $\hat{f}$ is {\em $c$-competitive} if
$$\forall \vecv,\, \int_0^1 \bigg(\hat{f}(u,\vecv)\bigg)^2du \leq c \inf_{\hat{f}'}\int_0^1 \bigg(\hat{f}'(u,\vecv)\bigg)^2du ,$$
where the infimum is over all unbiased nonnegative estimators $\hat{f}'$
of $f$.
For any unbiased estimator, the expectation of the
square is closely related to the variance:
\begin{align}
\var[\hat{f} | \vecv]&=\int_0^1 (\hat{f}(u,\vecv)-f(\vecv))^2 du 
=\int_0^1 \hat{f}(u,\vecv)^2 du -f(\vecv)^2 \label{var2moment}
\end{align}
When minimizing the expectation of the square, we also minimize the variance.
Moreover,  $c$-competitiveness means that
\begin{equation}  \label{ccompvar}
\forall \vecv,\  \var[\hat{f} | \vecv] \leq c
\inf_{\hat{f}'} \var[\hat{f}' | \vecv] +(c-1)f(\vecv)^2
\end{equation}
for all data vectors $\vecv$ for which  a nonnegative
unbiased estimator with finite variance on $\vecv$ exists,
the variance of the estimator is at most $c$ times the $\vecv$-optimal
variance plus an additive term of $(c-1)$ times $f(\vecv)^2$.

An important remark is due here.  
In the typical scenario, discussed in the introduction, the
sample is likely to provide little or no information on $f(\vecv)$,
the  variance is   $\Omega(f(\vecv)^2)$, and hence competitiveness
as we defined it in terms of the expectation of the square
translates to competitiveness of the variance.  
Otherwise, when for some data in the domain the sample is
likely to reveal the value,  it is not possible to obtain a
universal competitiveness result in terms of variance.
(One such example is $\range$ on PPS samples, looking at data tuples
where the maximum has sampling probability $1$.)  Interestingly, 
for $\range_2$ it is possible to 
get a bounded ratio in terms of variance.  More details
are in the companion experimental paper~\cite{CK_sdiff_arxiv:2012}.



 \section{The lower bound function and its lower hull} \label{lblh:sec}

  For a function $f$, we define the respective {\em lower bound function} 
  $\underline{f}$ and the lower hull function $H_f$.  We then
  characterize, in terms of properties of $\underline{f}$ and $H_f$ when
nonnegative unbiased estimators exists for
$f$ and when such estimators exist that also
have finite variances or are bounded.



\noindent
\medskip
{\bf The lower bound function $\underline{f}(S)$:}
For $Z\subset {\bf V}$,  we
define $\underline{f}(Z) = \inf\{ f(v) \mid v\in Z\}$ to be
the tightest lower bound on the values  of $f$ on $Z$.
We use the notation $\underline{f}(S)\equiv \underline{f}(V^*(S))$,
$\underline{f}(\rho,\vecv) \equiv \underline{f}(V^*(\rho,\vecv))$.
When $\vecv$ is fixed, we use
$\underline{f}^{(\vecv)}(u) \equiv \underline{f}(u,\vecv)$.

From Lemma~\ref{openset:lemma}, we obtain that
$\forall \vecv,\ \underline{f}^{(\vecv)}(u)$ is {\em left-continuous}, that is:
\begin{corollary}  \label{leftcon:lemma}
$\forall \vecv \forall \rho\in(0,1],\ \lim_{\eta\rightarrow \rho^{-}} \underline{f}^{(\vecv)} (\eta)=\underline{f}^{(\vecv)} (\rho)$.
\end{corollary}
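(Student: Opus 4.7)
The plan is to sandwich the one-sided limit from both sides, using the monotonicity structure of the sets $V^*(u,\vecv)$ for one direction and Lemma~\ref{openset:lemma} for the other.

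First, I would establish one inequality purely from monotonicity. Fix $\vecv$ and $\rho\in(0,1]$. Since $V^*(\eta,\vecv)\subseteq V^*(\rho,\vecv)$ whenever $\eta<\rho$ (this is the containment property stated in the ``Structure of the set of outcomes'' paragraph), taking the infimum of $f$ over a smaller set gives a larger value, so $\underline{f}^{(\vecv)}(\eta)\geq \underline{f}^{(\vecv)}(\rho)$ for every $\eta<\rho$. Moreover, $\underline{f}^{(\vecv)}$ is monotone non-increasing on $(0,\rho]$, so the one-sided limit $L=\lim_{\eta\to \rho^-}\underline{f}^{(\vecv)}(\eta)$ exists (possibly $+\infty$) and satisfies $L\geq \underline{f}^{(\vecv)}(\rho)$.

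For the reverse inequality, I would invoke Lemma~\ref{openset:lemma}. Let $\vecz\in V^*(\rho,\vecv)$ be arbitrary. The lemma yields an $\epsilon>0$ such that $\vecz\in V^*(x,\vecv)$ for all $x\in(\rho-\epsilon,1]$, in particular for all $\eta\in(\rho-\epsilon,\rho)$. Hence $\underline{f}^{(\vecv)}(\eta)\leq f(\vecz)$ throughout this punctured left neighborhood, and passing to the limit gives $L\leq f(\vecz)$. Since $\vecz\in V^*(\rho,\vecv)$ was arbitrary, taking the infimum over $\vecz$ yields $L\leq \underline{f}^{(\vecv)}(\rho)$, closing the sandwich.

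The only delicate point, and the one I would be careful about, is that Lemma~\ref{openset:lemma} is applied pointwise to each $\vecz\in V^*(\rho,\vecv)$ with its own $\epsilon_{\vecz}$; we do not need a uniform $\epsilon$ across all $\vecz$, because we only compare $\underline{f}^{(\vecv)}(\eta)$ to $f(\vecz)$ for $\eta$ eventually inside the individual neighborhood $(\rho-\epsilon_{\vecz},\rho)$, and this suffices to bound the limit $L$ (which does not depend on $\eta$) by $f(\vecz)$. Everything else is immediate from definitions.
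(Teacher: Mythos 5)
Your proof is correct and follows the same route the paper intends: the paper states the corollary as an immediate consequence of Lemma~\ref{openset:lemma} without spelling out the details, and your argument is precisely the expected fill-in — monotonicity of $\underline{f}^{(\vecv)}$ (from the containment order on $V^*$) gives $L\geq \underline{f}^{(\vecv)}(\rho)$, and Lemma~\ref{openset:lemma} applied pointwise to each $\vecz\in V^*(\rho,\vecv)$ gives $L\leq f(\vecz)$ and hence $L\leq \underline{f}^{(\vecv)}(\rho)$. Your remark that no uniform $\epsilon$ is needed is exactly the right thing to notice.
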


\begin{lemma}\label{nonneg:lemma}
A nonnegative unbiased estimator $\hat{f}$ must satisfy
\begin{equation} \label{nonneg:eq}
\forall \vecv, \forall \rho, \int_\rho^1 \hat{f}(u,\vecv)du \leq
\underline{f}^{(\vecv)}(\rho)
\end{equation}
\end{lemma}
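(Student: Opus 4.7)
The plan is to exploit the tree structure of outcomes already recorded in the excerpt: at any seed $\rho$, the outcome $S(\rho,\vecv)$ does not distinguish between the true data $\vecv$ and any other $\vecz\in V^*(\rho,\vecv)$, and moreover the same indistinguishability persists for all seeds $u\geq\rho$ since $V^*(u,\vecz)=V^*(u,\vecv)$ whenever $u\geq\rho$ and $\vecz\in V^*(\rho,\vecv)$. Therefore any estimator must take the same value on $(u,\vecz)$ as on $(u,\vecv)$ throughout $u\in[\rho,1]$. This observation is the whole engine of the proof.

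With that in hand, I would fix $\vecv$ and $\rho\in(0,1]$, and let $\vecz$ be an arbitrary element of $V^*(\rho,\vecv)$. Unbiasedness applied to $\vecz$ gives $\int_0^1 \hat{f}(u,\vecz)\,du = f(\vecz)$, and splitting the integral at $\rho$ together with nonnegativity of $\hat{f}$ on $[0,\rho]$ yields
\[
\int_\rho^1 \hat{f}(u,\vecz)\,du \;\leq\; f(\vecz).
\]
By the indistinguishability observation, the left-hand integrand equals $\hat{f}(u,\vecv)$ pointwise on $[\rho,1]$, so the inequality becomes
\[
\int_\rho^1 \hat{f}(u,\vecv)\,du \;\leq\; f(\vecz).
\]

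Finally I would take the infimum over $\vecz\in V^*(\rho,\vecv)$, which by definition equals $\underline{f}(V^*(\rho,\vecv))=\underline{f}^{(\vecv)}(\rho)$, delivering the claimed bound. There is no real obstacle here; the only subtlety worth a sentence is to justify that one can pass from ``$\hat{f}(u,\vecz)=\hat{f}(u,\vecv)$ as functions of the outcome'' to equality of the integrals, which is immediate because outcomes at seed $u\geq\rho$ coincide for the two data vectors and $\hat{f}$ is a function of the outcome. No measurability or continuity beyond what is already assumed in the section on estimators is needed.
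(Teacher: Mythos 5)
Your proof is correct and follows essentially the same route as the paper: use indistinguishability of $\vecv$ from any $\vecz\in V^*(\rho,\vecv)$ at seeds $u\geq\rho$ together with unbiasedness and nonnegativity applied to $\vecz$, then pass to $\underline{f}^{(\vecv)}(\rho)$. The only cosmetic difference is that you take an infimum over $\vecz$ at the end while the paper picks an explicit $\epsilon$-near-minimizer $\vecz^{(\epsilon)}$ and lets $\epsilon\to 0$; these are the same argument.
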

\begin{proof}
Unbiased and nonnegative $\hat{f}$ must satisfy
\begin{equation} \label{contbound}
\forall \vecv, \forall \rho\in (0,1].\ \int_{\rho}^1 \hat{f}(u,\vecv)du\leq \int_0^1 \hat{f}(u,\vecv)du=f(\vecv)\ .
\end{equation}
From definition of $\underline{f}$,
for all $\epsilon>0$ and $\rho$, there is
a vector ${\vecz}^{(\epsilon)}\in S(\rho,\vecv)$ such that $f(\vecz^{(\epsilon)})\leq
\underline{f}(\rho,\vecv)+\epsilon$.
Recall that for all $u\geq \rho$, $S(u,\vecv)=S(u,\vecz^{(\epsilon)})$, hence, using, \eqref{contbound},
$$\int_{\rho}^{1} \hat{f}(u,\vecv)du= \int_{\rho}^{1} \hat{f}(u,\vecz^{(\epsilon)})du
\leq f(\vecz^{(\epsilon)})\leq  \underline{f}(\rho,\vecv) + \epsilon\ .$$
Taking the limit as $\epsilon\rightarrow 0$ we obtain
$\int_{\rho}^{1} \hat{f}(u,\vecv)du \leq \underline{f}(\rho,\vecv)\ .$\onlyinproc{\qed}
\end{proof}

\noindent
\medskip
{\bf The lower hull of the lower bound function and $\vecv$-optimality:}
 We denote the function corresponding to the lower boundary of the convex hull (lower hull)
 of $\underline{f}^{(\vecv)}$ by
$H^{(\vecv)}_f$.  
 Our interest in the lower hull is due to the following relation
(The proof is postponed to Section~\ref{vvaropt:sec}):
\begin{theorem}  \label{voptlh}
An estimator $\hat{f}$ is $\vecv$-optimal
 if and only if for $u\in [0,1]$ almost everywhere
$$\hat{f}(u,\vecv) = -\frac{d H^{(\vecv)}_f(u)}{du}\ .$$
Moreover,
when an unbiased and nonnegative estimator exists for $f$, there
also exists, for any data $\vecv$, a
nonnegative and unbiased $\vecv$-optimal estimator.
\end{theorem}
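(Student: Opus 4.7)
The plan is to first translate $\vecv$-optimality into a univariate constrained calculus-of-variations problem and then identify $-dH^{(\vecv)}_f/du$ as its unique minimizer using a convexity / integration-by-parts argument.

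Fix $\vecv$ and write $g(u)=\hat f(u,\vecv)$. The requirement that $\hat f$ is unbiased and nonnegative on every data vector places three constraints on $g$: (i)~$g\geq 0$; (ii)~$\int_0^1 g(u)\,du=f(\vecv)$; and (iii)~$\int_\rho^1 g(u)\,du\leq \underline{f}^{(\vecv)}(\rho)$ for every $\rho\in(0,1]$. Constraint~(iii) follows from Lemma~\ref{nonneg:lemma} applied to every $\vecz\in V^*(\rho,\vecv)$, using the structural fact that $\hat f(u,\vecz)=\hat f(u,\vecv)$ for $u\geq\rho$. By~(\ref{var2moment}), minimizing the variance at $\vecv$ is the same as minimizing $\int_0^1 g(u)^2\,du$. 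Set $F(\rho)=\int_\rho^1 g(u)\,du$, so $F'(\rho)=-g(\rho)$; the problem becomes: minimize $\int_0^1 F'(u)^2\,du$ over absolutely continuous nonincreasing $F$ with $F(0)=f(\vecv)$, $F(1)=0$, and $F\leq\underline{f}^{(\vecv)}$ pointwise.

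I would then show that the a.e.\ unique minimizer is $F=H^{(\vecv)}_f$. Feasibility is immediate: $H^{(\vecv)}_f$ is convex, bounded above by $\underline{f}^{(\vecv)}$, nonincreasing (because $\underline{f}^{(\vecv)}$ is), and meets the endpoint values because a small enough seed reveals $\vecv$ completely, giving $\underline{f}^{(\vecv)}(0^+)=f(\vecv)$ via Corollary~\ref{leftcon:lemma}. For any competing feasible $F$, set $\phi=F-H^{(\vecv)}_f$; then $\phi$ is absolutely continuous with $\phi(0)=\phi(1)=0$, and
\[
\int_0^1 F'^2\,du - \int_0^1 \bigl((H^{(\vecv)}_f)'\bigr)^2\,du \;=\; \int_0^1 \phi'(u)^2\,du + 2\int_0^1 (H^{(\vecv)}_f)'(u)\,\phi'(u)\,du.
\]
Stieltjes integration by parts rewrites the cross term as $-2\int_0^1 \phi(u)\,d(H^{(\vecv)}_f)'(u)$; the boundary contributions $[(H^{(\vecv)}_f)'\phi]_0^1$ vanish because $\phi$ vanishes at the endpoints and the one-sided derivatives of the bounded convex function $H^{(\vecv)}_f$ are finite. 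The measure $d(H^{(\vecv)}_f)'$ is nonnegative by convexity and is supported on the contact set $\{u:H^{(\vecv)}_f(u)=\underline{f}^{(\vecv)}(u)\}$, since $H^{(\vecv)}_f$ is affine on each gap. On that set $F\leq\underline{f}^{(\vecv)}=H^{(\vecv)}_f$, so $\phi\leq 0$, and the cross term is $\geq 0$. Hence $\int F'^2\geq \int (H^{(\vecv)}_f)'^2$, with equality forcing $\int\phi'^2=0$ and therefore, since $\phi(0)=0$, $\phi\equiv 0$. This yields the ``if and only if'' and a.e.~uniqueness.

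For the ``moreover'' clause, one must produce an actual global estimator whose restriction to the $\vecv$-chain realizes the density $-(H^{(\vecv)}_f)'$. Starting from any globally unbiased nonnegative estimator (available by hypothesis), the plan is to walk down the containment tree of outcomes (Figure~\ref{vstar_tree:fig}) and, at every node whose $V^*$ contains $\vecv$, overwrite $\hat f$ with the prescribed value, transferring the induced corrections into sibling subtrees so that $\E[\hat f\mid\vecz]=f(\vecz)$ is preserved for every $\vecz$ off the $\vecv$-chain. Nonnegativity is preserved because on every non-contact interval one has strict slack $H^{(\vecv)}_f<\underline{f}^{(\vecv)}$ that absorbs the adjustments; this is essentially the J-estimator construction developed later in the paper, specialized to a single target $\vecv$. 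The hard part I expect is precisely this clause: making the inductive extension rigorous when the contact set is a general closed subset of $[0,1]$ and the outcome tree is continuous. The variational core of the proof is clean once one invokes the left-continuity of $\underline{f}^{(\vecv)}$ from Corollary~\ref{leftcon:lemma} together with the standard Riesz--Stieltjes integration by parts for convex functions.
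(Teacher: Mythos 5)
Your variational argument for the ``if and only if'' part is a genuinely different and arguably cleaner route than the paper's. The paper reasons combinatorially: it derives necessary conditions (monotonicity a.e.\ and ``touching the lower bound'' between any two strictly different estimate values), reformulates them as the infimum identity \eqref{condoptv}, and then proves existence and uniqueness by an explicit decomposition of $(0,\rho_v]$ into LB and gap subintervals. You instead pose the problem as minimizing $\int F'^2$ over nonincreasing absolutely continuous $F$ with $F\le\underline f^{(\vecv)}$, decompose any competitor as $H_f^{(\vecv)}+\phi$, and kill the cross term via Stieltjes integration by parts and the sign of $\phi$ on the contact set. That is a standard obstacle-problem argument, and it delivers optimality, uniqueness, and the identification with the lower hull in one stroke; what it loses relative to the paper is the explicit pointwise description \eqref{condoptv}, which the paper reuses elsewhere (e.g.\ in the proof of Theorem~\ref{competitivehatf}).

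Two points need tightening. First, your justification that the boundary term at $u=0$ vanishes is not quite right: boundedness plus convexity of $H_f^{(\vecv)}$ does \emph{not} give a finite one-sided derivative at the endpoint $u=0$ (the derivative can blow up there). The term does vanish, but because $H_f'$ is monotone and, in the finite-variance case, square-integrable, hence $o(u^{-1/2})$ near $0$, while $\phi(u)=O(u^{1/2})$ by Cauchy--Schwarz; you should also note that $F(\rho)\le\lim_{x\to\rho^+}\underline f^{(\vecv)}(x)$ by continuity of $F$, so $\phi\le 0$ at contact points even where $\underline f^{(\vecv)}$ jumps. Second, your sketch of the ``moreover'' clause as ``transferring the induced corrections into sibling subtrees'' suggests a mechanism the paper does not use and that would be delicate to make rigorous (a single off-chain vector $\vecz$ can be affected at many levels simultaneously, so local transfers do not obviously preserve unbiasedness for all $\vecz$). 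The paper avoids this entirely via Lemma~\ref{completioncoro}: the $\vecv$-optimal density satisfies $\int_\rho^1\hat f^{(\vecv)}(u)\,du\le\underline f^{(\vecv)}(\rho)=\underline f^{(\vecz)}(\rho)\le f(\vecz)$ for every $\vecz$ and every $\rho$ at which their chains have not yet diverged, so the restriction to the $\vecv$-chain is already a valid partial specification and can be completed independently on each chain by running the J construction from the divergence point onward --- no transfer is needed. You do correctly flag this as the hard part and point at the J construction, so the gap is one of mechanism rather than of missing the key fact.
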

We use the notation $\hat{f}^{(\vecv)}(u) = -\frac{d H^{(\vecv)}_f(u)}{du}$ for the {\em $\vecv$-optimal estimates on outcomes
consistent with $\vecv$}.   
Since the lower bound function is monotone non-increasing,
so is $H^{(\vecv)}_f$, and therefore $H^{(\vecv)}_f$ is
differentiable almost everywhere and $\hat{f}^{(\vecv)}$ is
defined almost everywhere.
Figure~\ref{LBopt:fig} illustrates an example lower bound function and
the corresponding  lower hull.  
\begin{figure}[htbp]
\centerline{
\ifpdf
\includegraphics[width=0.40\textwidth]{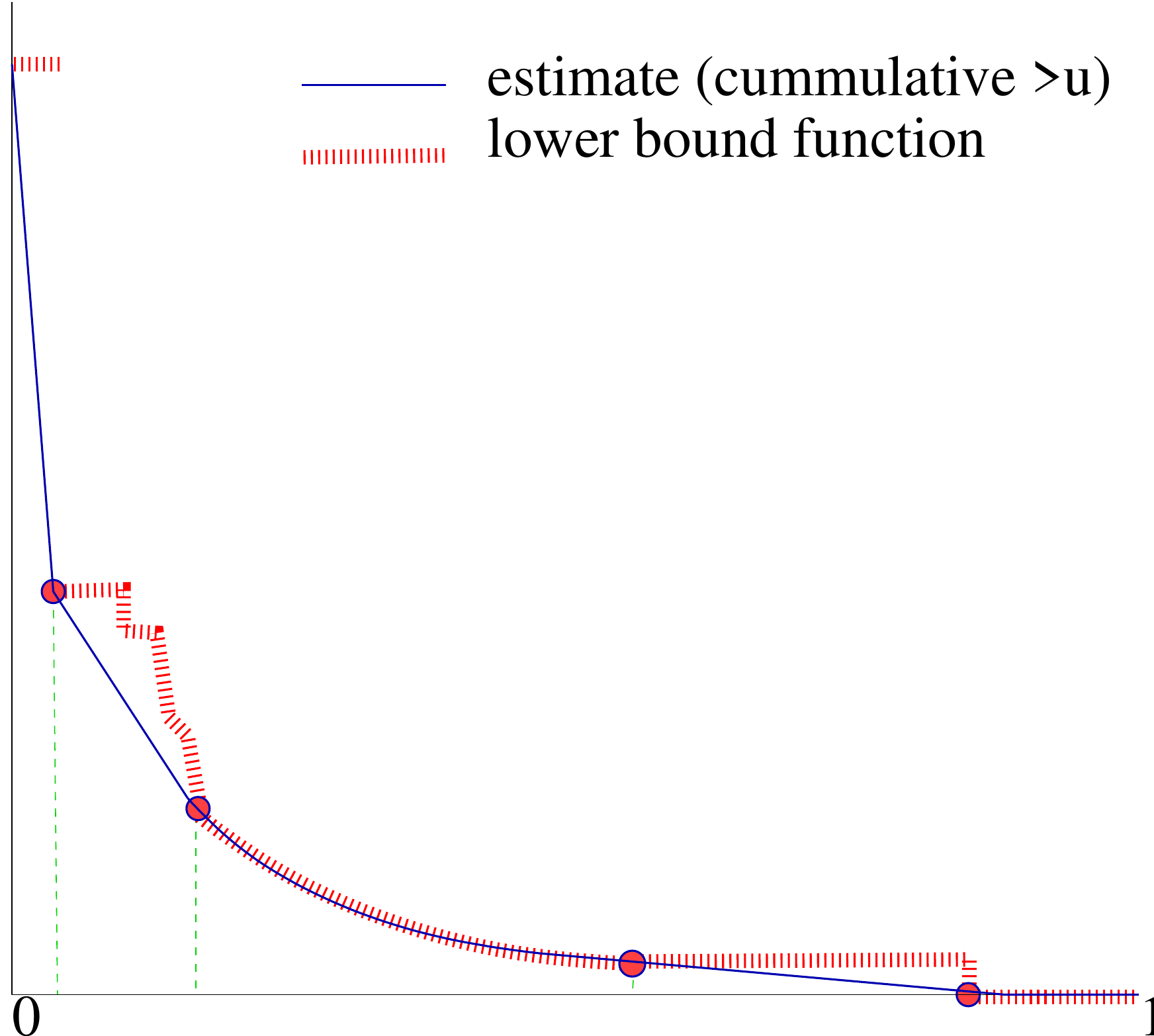}
\else
\epsfig{figure=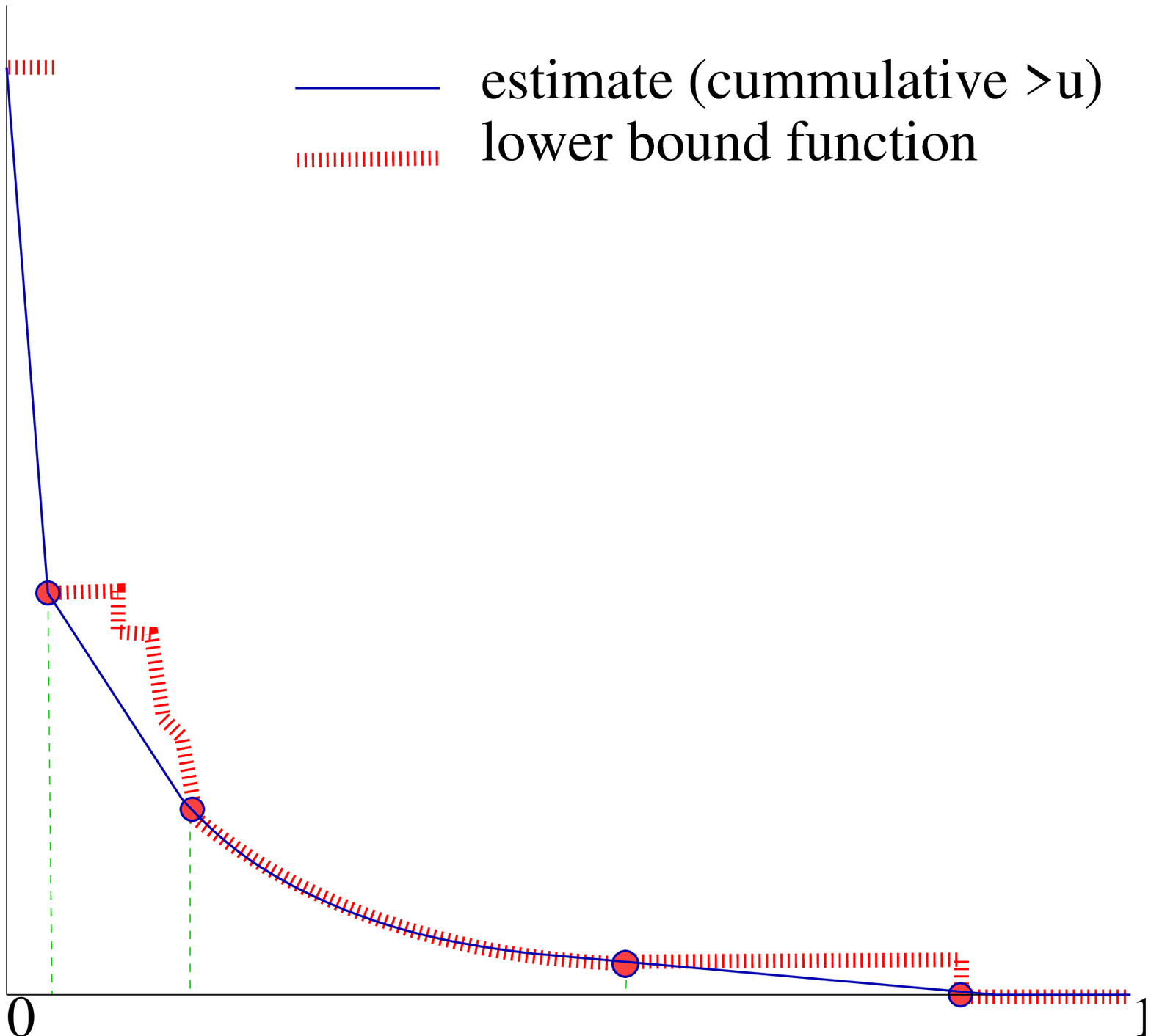,width=0.40\textwidth}
\fi
}
\caption{Lower bound function $\underline{f}^{(\vecv)}(u)$ for $u\in (0,1]$ and
corresponding lower hull $H^{(\vecv)}_f(u)$, which is also
the integral of the nonnegative estimator with minimum
variance on $\vecv$:
$\int_u^1 \hat{f}^{(\vecv)}(x)dx$.
The figure visualizes the lower bound function which is always
left-continuous and monotone non increasing.  The lower hull is
continuous and also monotone non-increasing.
\label{LBopt:fig}}
\end{figure}

 \section{Characterization} \label{charac:sec}

\begin{theorem}  \label{charact:thm}
$f\geq 0$ has an estimator that is
\begin{trivlist}
\item
$\bullet$ unbiased and nonnegative $\iff$
\begin{equation} \label{nec_req}
\forall \vecv\in {\bf V},\,
\lim_{u \rightarrow 0^+} \underline{f}^{(\vecv)}(u) = f(\vecv)\ .
\end{equation}
\ignore{
$\bullet$ unbiased, nonnegative, and finite variance
$\Leftarrow$
 \begin{equation} \label{bounded_var_shared_suff_req}
 \forall \vecv\in {\bf V},\,
 \int_0^1 \bigg(\frac{f(\vecv)-\underline{f}(u,\vecv)}{u}\bigg)^2 du < \infty \ .
 \end{equation}\\
 $\Rightarrow$
 \begin{equation} \label{bounded_var_shared_nec_req}
 \forall \vecv\in {\bf V},\,
 \lim_{u \rightarrow 0^+} \frac{f(\vecv)-\underline{f}^{(\vecv)}(u)^2}{u} < \infty \ .
 \end{equation}
} 
$\bullet$ unbiased, nonnegative, and finite variances $\iff$
\begin{equation} \label{bounded_var_shared_nec_req}
\forall \vecv\in {\bf V},\,
\int_0^1 \bigg(\frac{d H^{(\vecv)}_f(u)}{du} \bigg)^2 du  < \infty \ .
\end{equation}
$\bullet$ unbiased, nonnegative, and bounded $\iff$
\begin{equation} \label{bounded_shared_nec_req}
\forall \vecv\in {\bf V},\,
\lim_{u \rightarrow 0^+} \frac{f(\vecv)-\underline{f}^{(\vecv)}(u)}{u} < \infty \ .
\end{equation}
\end{trivlist}
\end{theorem}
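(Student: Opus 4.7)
The plan is to split each biconditional into a necessity direction (any admissible estimator forces the stated property of $\underline{f}$ or $H^{(\vecv)}_f$) and a sufficiency direction (exhibit a single witness estimator with the property on every $\vecv$).

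For necessity, the common engine is Lemma~\ref{nonneg:lemma} combined with unbiasedness. Subtracting the two relations gives
\[
\int_0^\rho \hat{f}(u,\vecv)\,du \;=\; f(\vecv) - \int_\rho^1 \hat{f}(u,\vecv)\,du \;\geq\; f(\vecv) - \underline{f}^{(\vecv)}(\rho).
\]
For \eqref{nec_req}: integrability of $\hat{f}$ sends the left side to $0$ as $\rho\to 0^+$, so $\lim_{\rho\to 0^+}\underline{f}^{(\vecv)}(\rho)\ge f(\vecv)$; the reverse inequality is automatic since $\vecv\in V^*(\rho,\vecv)$. For \eqref{bounded_shared_nec_req}: if $\hat{f}(\cdot,\vecv)\le M$ on $(0,1]$, then $\int_0^\rho \hat{f}(u,\vecv)\,du\le M\rho$, so $(f(\vecv)-\underline{f}^{(\vecv)}(\rho))/\rho\le M$, and the limit is finite. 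For \eqref{bounded_var_shared_nec_req}: invoke Theorem~\ref{voptlh}, which identifies the $\vecv$-optimal estimator as $-dH^{(\vecv)}_f/du$ with second moment $\int_0^1 (dH^{(\vecv)}_f/du)^2\,du$; since this is the minimum second moment over all unbiased nonnegative estimators on $\vecv$, existence of any finite-variance candidate forces the integral to be finite.

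For sufficiency, I must produce a single estimator, defined on outcomes rather than on $(u,\vecv)$ pairs, whose pointwise properties survive simultaneously at every $\vecv$. This is the genuine obstacle: the per-$\vecv$ minimizers $-dH^{(\vecv)}_f/du$ from Theorem~\ref{voptlh} generally disagree on outcomes shared by multiple data vectors and do not glue into a globally defined estimator. The plan is to appeal to the J estimator advertised in the introduction, whose construction walks the containment tree of outcomes described in Section~\ref{model:sec} and distributes mass consistently. I would then show that (i) once \eqref{nec_req} holds at every $\vecv$, the J estimator's construction is well-defined and produces an unbiased nonnegative estimator; (ii) under \eqref{bounded_var_shared_nec_req} its second moment on each $\vecv$ is bounded by a constant multiple of the $\vecv$-optimal second moment $\int_0^1 (dH^{(\vecv)}_f/du)^2\,du$, via the $O(1)$-competitiveness promised in the introduction; and (iii) under \eqref{bounded_shared_nec_req}, an analogous pointwise bound on the J estimator follows from the finite limit, yielding boundedness on $\vecv$.

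The hard part is therefore not the characterization itself but the construction and analysis of the J estimator: showing that a globally consistent estimator inherits, up to constants, the locally optimal behavior of $-dH^{(\vecv)}_f/du$. Granted that machinery, each sufficiency direction is essentially the reverse of the corresponding necessity calculation, with the J estimator supplied as witness; each necessity direction is the short derivation above from Lemma~\ref{nonneg:lemma} together with Theorem~\ref{voptlh} for the finite-variance case.
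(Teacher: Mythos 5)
Your proposal is correct and follows essentially the same route as the paper: necessity via Lemma~\ref{nonneg:lemma} (rearranged to bound $\int_0^\rho \hat{f}$ from below by $f(\vecv)-\underline{f}^{(\vecv)}(\rho)$) together with Theorem~\ref{voptlh} for the finite-variance case, and sufficiency via the J estimator, whose well-definedness, unbiasedness, nonnegativity, pointwise boundedness under \eqref{bounded_shared_nec_req}, and $O(1)$-competitiveness are exactly the ingredients the paper isolates in Lemma~\ref{Jprop:lemma} and Theorem~\ref{competitivehatf}. You correctly identify the construction and analysis of the J estimator as the genuine work, and the paper likewise discharges those via the cited lemma and theorem rather than inside the characterization proof itself.
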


We establish sufficiency in Theorem \ref{charact:thm}  by constructing an
estimator $\hat{f}^{(J)}$ (the J estimator) that is unbiased and nonnegative when
\eqref{nec_req} holds,  bounded when
\eqref{bounded_shared_nec_req} holds, and has finite variances
if \eqref{bounded_var_shared_nec_req} holds.
The proof of the theorem is provided in Section~\ref{charactthmproof}, following
the presentation of the J estimator in the next section.

\section{The J estimator}
Fixing $\vecv$, we define an estimator
$\hat{f}^{(J)}(u,\vecv)$ incrementally, starting with $u=1$ and such
that the value at $u$ depends on values at $u'> u$. We first define
$\hat{f}^{(J)}(u,\vecv)$ for all $u\in (\frac{1}{2}, 1]$ by
$\hat{f}^{(J)}(u,\vecv) = 2\underline{f}(1,\vecv)$.
At each step we consider intervals of the form $(2^{-j-1},2^{-j}]$,
setting the estimate to the same value for all outcomes  $S(u,\vecv)$ for $u\in (2^{-j-1},2^{-j}]$.
Assuming the estimator is defined for $u\geq 2^{-j}$, we extend the
definition to the interval $u\in  (2^{-j-1},2^{-j}]$ as follows. 
\begin{align*}
 \hat{f}^{(J)}(u,\vecv) &= 0 \, \text{, {\bf if}}  \,\,
\underline{f}(2^{-j},\vecv)=\int_{2^{-j}}^1 \hat{f}^{(J)}(u,\vecv)du \\
 \hat{f}^{(J)}(u,\vecv)&=2^{j+1} \bigg(\underline{f}(2^{-j},\vecv)- \int_{2^{-j}}^1 \hat{f}^{(J)}(u,\vecv)du  \bigg)\,  
 \text{, {\bf otherwise}} 
\end{align*}

\begin{figure}[htbp]
\centerline{
\ifpdf
\includegraphics[width=0.47\textwidth]{Jest}
\else
\epsfig{figure=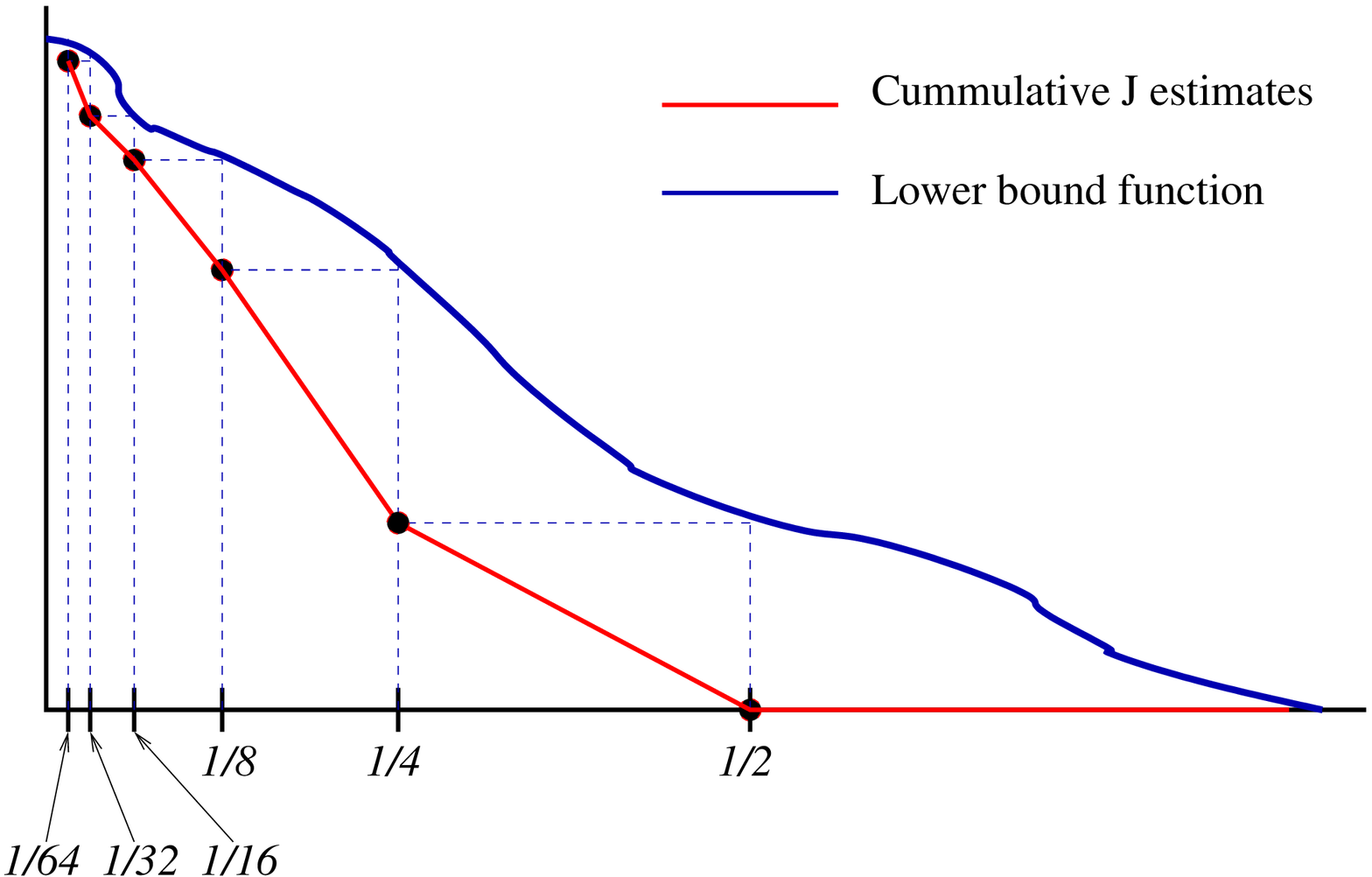,width=0.47\textwidth}
\fi
}
\caption{Lower bound function $\underline{f}^{(\vecv)}(u)$ for $u\in (0,1]$ and
 cumulative J estimates on outcomes consistent with $\vecv$
$\int_u^1 \hat{f}^{(J)}(x,\vecv)dx$.  The J estimate
$\hat{f}^{(J)}(u,\vecv)$ is the negated slope.
\label{Jest:fig}}
\end{figure}

\begin{lemma} \label{Jprop:lemma}
The J estimator is well defined,  is unbiased and nonnegative when
\eqref{nec_req} holds, and
satisfies
\begin{align}
\forall \rho  \forall \vecv,\ \int_\rho^1 \hat{f}^{(J)}(u,\vecv)du  &\le \underline{f}(\rho,\vecv) \label{leJ}\\
\forall \rho  \forall \vecv, \int_{\rho}^1 \hat{f}^{(J)} (u,\vecv) du &\ge \underline{f}(4\rho,\vecv)\ .\label{geJ}
\end{align}
\end{lemma}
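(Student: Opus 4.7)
The plan is to track the exact value of the cumulative integral at every dyadic cut point $\{2^{-j}\}_{j\geq 0}$ via induction, and then extend to arbitrary $\rho$ using monotonicity of $\underline{f}^{(\vecv)}$.

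First I would prove by induction on $j\geq 0$ the simultaneous statement that $\hat{f}^{(J)}(\cdot,\vecv)$ is well-defined and nonnegative on $(2^{-j-1},1]$ and satisfies the dyadic identity
\[
\int_{2^{-j-1}}^1 \hat{f}^{(J)}(u,\vecv)\,du \;=\; \underline{f}(2^{-j},\vecv).
\]
The base case $j=0$ is immediate: on $(1/2,1]$ the estimator is the constant $2\underline{f}(1,\vecv)\geq 0$, whose integral over the half-width interval is exactly $\underline{f}(1,\vecv)$. For the induction step, the recursion on $(2^{-j-2},2^{-j-1}]$ uses only values already defined and $\underline{f}(2^{-j-1},\vecv)$, so well-definedness is immediate. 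By the inductive hypothesis the difference inside the formula equals $\underline{f}(2^{-j-1},\vecv)-\underline{f}(2^{-j},\vecv)$, which is nonnegative because $V^*(u,\vecv)$ is non-decreasing in $u$ so that $\underline{f}^{(\vecv)}$ is monotone non-increasing in $u$. A width-times-height calculation on $(2^{-j-2},2^{-j-1}]$ (width $2^{-j-2}$, value $2^{j+2}$ times this difference) adds exactly $\underline{f}(2^{-j-1},\vecv)-\underline{f}(2^{-j},\vecv)$ to the running integral, yielding $\underline{f}(2^{-j-1},\vecv)$ and closing the induction. The degenerate ``if $0$'' branch corresponds to $\underline{f}(2^{-j-1},\vecv)=\underline{f}(2^{-j},\vecv)$ and trivially preserves the identity.

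Next, for any $\rho\in(2^{-j-1},2^{-j}]$, I would sandwich $\int_\rho^1 \hat{f}^{(J)}$ between $\int_{2^{-j}}^1 \hat{f}^{(J)}$ and $\int_{2^{-j-1}}^1 \hat{f}^{(J)}$ using $\hat{f}^{(J)}\geq 0$, then apply the dyadic identity to get
\[
\underline{f}(2^{-j+1},\vecv)\;\leq\;\int_\rho^1 \hat{f}^{(J)}(u,\vecv)\,du\;\leq\;\underline{f}(2^{-j},\vecv).
\]
For \eqref{leJ}, monotonicity of $\underline{f}^{(\vecv)}$ and $\rho\leq 2^{-j}$ give $\underline{f}(2^{-j},\vecv)\leq \underline{f}(\rho,\vecv)$. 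For \eqref{geJ}, the strict inequality $\rho>2^{-j-1}$ yields $4\rho>2^{-j+1}$, so $\underline{f}(4\rho,\vecv)\leq \underline{f}(2^{-j+1},\vecv)$. This is exactly where the factor $4$ comes from: one factor of $2$ is the one-level lag between $\int_{2^{-j-1}}^1 \hat{f}^{(J)}$ and the right-hand side $\underline{f}(2^{-j},\vecv)$ of the dyadic identity, and the other factor of $2$ comes from moving $\rho$ down to the dyadic point $2^{-j-1}$ before invoking that identity.

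Finally, for unbiasedness, I would combine both inequalities to obtain $\underline{f}(4\rho,\vecv)\leq \int_\rho^1 \hat{f}^{(J)}(u,\vecv)\,du\leq \underline{f}(\rho,\vecv)$, and let $\rho\to 0^+$. Assumption \eqref{nec_req} squeezes both sides to $f(\vecv)$, giving $\int_0^1 \hat{f}^{(J)}(u,\vecv)\,du=f(\vecv)$ and hence unbiasedness. The main obstacle is purely bookkeeping: keeping the dyadic indices consistent between the two branches of the recursion and tracing precisely how the one-level lag in the dyadic identity combines with $\rho$'s non-dyadic location to produce the factor $4$; the remaining content is monotonicity of $\underline{f}^{(\vecv)}$ and a short telescoping calculation.
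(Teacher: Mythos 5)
Your proof of the dyadic identity, the sandwich argument for \eqref{leJ} and \eqref{geJ}, and the squeeze to unbiasedness all match the paper's argument, up to an index shift (your identity $\int_{2^{-j-1}}^1 \hat{f}^{(J)}\,du = \underline{f}(2^{-j},\vecv)$ is the paper's invariant with $j\mapsto j+1$), and the accounting of the factor $4$ is correct. However, there is one genuine gap: you treat ``well-defined'' as meaning only that the recursion on $(2^{-j-2},2^{-j-1}]$ references quantities already computed, and dismiss it as immediate. The paper means something different and substantive by well-definedness: the construction is presented \emph{relative to a particular data vector $\vecv$}, but an estimator must assign a single value to each outcome $S$, which is consistent with \emph{many} data vectors. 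One must check that for every outcome $S(u,\vecv)$ with $u\in(2^{-j-1},2^{-j}]$, the assigned value is the same for every $\vecz\in V^*(u,\vecv)$, including those $\vecz\notin V^*(2^{-j-1},\vecv)$. The paper establishes this by observing that $\underline{f}(2^{-j},\vecv)=\underline{f}(2^{-j},\vecz)$ for all $\vecz\in V^*(2^{-j},\vecv)$ (a consequence of the structure $V^*(x,\vecz)=V^*(x,\vecv)$ for $x\geq u$), so the recursion's inputs --- the lower bound at the ancestor outcome and the cumulative integral over ancestors --- are determined by the outcome alone. Without this observation the ``estimator'' could be multi-valued and the lemma's statement would not even be meaningful; you should add it before the induction.
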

\begin{proof}
We first argue that the constructions, which are presented relative to a
particular choices of the data $\vecv$, produce a consistent
estimator.  For that, we have to show that for every outcome $S(\rho,\vecv)$, the
assigned value is the same for all vectors $\vecz\in V^*(\rho,\vecv)$.
Since  $\underline{f}(\rho,\vecv) =
\underline{f}(\rho,\vecz)$ for all  $\vecz\in V^*(\rho,\vecv)$,
in particular this holds for $\rho=2^{-j}$, so
 the setting of the estimator for $u\in
(2^{-j-1},2^{-j}]$ is the same for all $S(u,\vecz)$ where
$\vecz\in V^*(2^{-j},\vecv)$ (also when $\vecz\not\in
V^*(2^{-j-1},\vecv)$). Therefore, the resulting estimator is
consistently defined.


We show that the construction maintains the following invariant for  $j\geq 1$:
\begin{equation} \label{inv1}
\underline{f}(2^{-j+1} ,\vecv) = \int_{2^{-j}}^1 \hat{f}^{(J)}(u,\vecv)du\ .
\end{equation}
From the first step of the construction,
$$\int_{1/2}^1 \hat{f}^{(J)}(u,\vecv) du = 
\int_{1/2}^1 2\underline{f}(1,\vecv) du = 
\underline{f}(1,\vecv)\ .$$
So \eqref{inv1} holds for $j=1$.
Now we assume by induction that \eqref{inv1} holds for $j$ and
establish that it holds for $j+1$.
If $\underline{f}(2^{-j},\vecv)= \underline{f}(2^{-j+1},\vecv)$,
then by the definition of the J estimator
$\int_{2^{-j-1}}^{2^{-j}} \hat{f}^{(J)}(u,\vecv)du=0$ and we get
$$\int_{2^{-j-1}}^1 \hat{f}^{(J)}(u,\vecv)du = \int_{2^{-j}}^1
\hat{f}^{(J)}(u,\vecv)du = \underline{f}(2^{-j+1},\vecv) = \underline{f}(2^{-j},\vecv)\ .$$
Otherwise, by definition,
$\int_{2^{-j-1}}^{2^{-j}} \hat{f}^{(J)}(u,\vecv)du=\underline{f}(2^{-j},\vecv)-\underline{f}(2^{-j+1},\vecv)$ and hence
$\int_{2^{-j-1}}^1 \hat{f}^{(J)}(u,\vecv)du
=\underline{f}(2^{-j},\vecv)$ and \eqref{inv1} holds for $j+1$.

From monotonicity, $\underline{f}(2^{-j+1},\vecv) \leq
\underline{f}(2^{-j},\vecv)$ and when substituting \eqref{inv1} in
the definition of the estimator we obtain that the estimates are always nonnegative.

To establish \eqref{leJ} we use \eqref{inv1}, the relation
$2^{\lfloor\log_2 \rho\rfloor} \leq \rho < 2^{1+\lfloor\log_2
  \rho\rfloor}$, and monotonicity of $\underline{f}(u,\vecv)$,
to obtain
\begin{align*}
& \int_\rho^1 \hat{f}^{(J)}(u,\vecv)du \leq  \int_{2^{\lfloor\log_2
    \rho\rfloor}} ^1 \hat{f}^{(J)}(u,\vecv)du = 
\underline{f}(2^{1 +
  \lfloor\log_2
    \rho\rfloor},\vecv) \leq \underline{f}(\rho,\vecv)\ .
\end{align*}

Similarly, we
establish \eqref{geJ} using
 \eqref{inv1}, and the relation
$2^{-1+\lceil \log_2   \rho\rceil}\leq  \rho \leq 2^{\lceil \log_2
  \rho\rceil}$:
\begin{align*}
\int_\rho^1 \hat{f}^{(J)}(u,\vecv) du & \ge \int_
{2^{\lceil \log_2\rho\rceil}}^1 \hat{f}^{(J)}(u,\vecv)du 
= 
  \underline{f}(2^{1+\lceil \log_2\rho\rceil},\vecv) \ge
  \underline{f}(4\rho,\vecv)
\end{align*}

Lastly, unbiasedness follows
 from \eqref{nec_req}
and combining \eqref{leJ} and \eqref{geJ}:
$$\underline{f}(\rho,\vecv)\ge \int_{\rho}^1 \hat{f}^{(J)}(u,\vecv) du \geq \underline{f}(4\rho,\vecv)\ .$$
  when we take the limit as $\rho\rightarrow 0$. \onlyinproc{\qed}
\end{proof}


\noindent
{\bf  Computing the J estimate from an outcome $S$:}  From the outcome
we know the seed value $\rho$ and 
the lower bound function $\underline{f}^{(\vecv)}(u)$ 
for all $u\geq \rho$
(recall that the lower bound on this range is the same for all data
$\vecv\in V^*(S)$, so we do not need to know the data $\vecv$). 
We compute $i\gets \lfloor -\log_2 \rho \rfloor$ and use the invariant \eqref{inv1} in
the definition of J, obtaining
the J estimate 
$2^{i+1}(\underline{f}(2^{-i},\vecv)-
\underline{f}(2^{-i+1},\vecv))$.

\smallskip
\noindent
{\bf  Example:}   We demonstrate the application of the J estimator
through a simple example.  The data domain in our example includes 
pairs $(v_1,v_2)$ of nonnegative
reals.  We are interested in $f(v_1,v_2)=(\max\{v_1-v_2,0\})^2$,
which sum aggregates to (the square of) the
 ``one sided''  Euclidean distance.  The data is PPS sampled
with threshold $\tau=1$ for both entries, therefore,
the sampling probability of entry $i$ is $\min\{1,
v_i\}$.  Sampling is coordinated, which means that for a seed $\rho\in
U[0,1]$,
entry $i$ is sampled if and only if $v_i\geq \rho$.  The outcome $S$ includes
the values of the sampled entries and the seed value $\rho$.
If no entry is sampled, or only the second entry is sampled, the
lower bound function for $x\geq \rho$ is $0$ and the
J estimate is $0$.    If only the first entry is sampled, the lower
bound function, for $x\geq \rho$, and accordingly, the J estimate are
\begin{align*}
\underline{f}^{(\vecv)}(x) &=
\max\{0, v_1-x\}^2 \\
\hat{f}^{(J)}(S) &= 
2^{\lfloor -\log_2 \rho \rfloor +1} \bigg(\max\{0, v_1- 2^{-\lfloor -\log_2
  \rho \rfloor}\}^2 \notinproc{\\
 &}  -\max\{0, v_1- 2^{1-\lfloor -\log_2
  \rho \rfloor}\}^2\bigg)\ .
\end{align*}
If both entries are sampled, the lower
bound function for $x\geq \rho$ and the $J$ estimate are
{\scriptsize
\begin{align*}
\underline{f}^{(\vecv)}(x) &=
\max\{0, v_1-\max\{v_2,x\}\}^2 \\
\hat{f}^{(J)}(S) &= 
2^{\lfloor -\log_2 \rho \rfloor +1}\bigg(\max\{0, v_1- \max\{v_2,2^{-\lfloor -\log_2
  \rho \rfloor}\}\}^2  
-\max\{0, v_1- \max\{v_2,2^{1-\lfloor -\log_2
  \rho \rfloor}\}\}^2\bigg)\ .
\end{align*}
}

\subsection{Competitiveness of the J estimator} \label{Jcomproof:sec}
\begin{theorem} \label{competitivehatf}
The estimator $\hat{f}^{(J)}$ is $O(1)$-competitive. 
\end{theorem}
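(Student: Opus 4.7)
The plan is to show $\int_0^1 (\hat{f}^{(J)}(u,\vecv))^2\,du \le c\,\int_0^1 (\hat{f}'(u,\vecv))^2\,du$ for every nonnegative unbiased estimator $\hat{f}'$, with $c$ an absolute constant. Write $\alpha_j := \underline{f}(2^{-j},\vecv)$ for $j\ge 0$ and $\alpha_{-1} := 0$, and set $\delta_j := \alpha_j - \alpha_{j-1} \ge 0$. The construction of the J estimator together with the invariant~\eqref{inv1} shows that $\hat{f}^{(J)}(\cdot,\vecv)$ is the constant $2^{j+1}\delta_j$ on the dyadic interval $I_j := (2^{-j-1},2^{-j}]$, so a direct integration yields
\[
\int_0^1 \bigl(\hat{f}^{(J)}(u,\vecv)\bigr)^2\, du \;=\; \sum_{j\ge 0} 2^{j+1}\delta_j^2.
\]

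For any nonnegative unbiased $\hat{f}'$, let $Y_j := \int_{I_j}\hat{f}'(u,\vecv)\,du \ge 0$. Cauchy--Schwarz on $I_j$, whose length is $2^{-j-1}$, gives $\int_{I_j}(\hat{f}')^2\,du \ge 2^{j+1}Y_j^2$, hence $\int_0^1 (\hat{f}')^2\,du \ge \sum_j 2^{j+1}Y_j^2$. Unbiasedness yields $\sum_j Y_j = f(\vecv)$, and Lemma~\ref{nonneg:lemma} applied at $\rho = 2^{-j}$ produces the tail bounds $\sum_{k\ge j} Y_k \ge f(\vecv) - \alpha_j = \sum_{k>j}\delta_k$ for every $j \ge 0$. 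Competitiveness thus reduces to a purely combinatorial claim: whenever $(Y_j)_{j\ge 0}$ are nonnegative with $\sum_j Y_j = \sum_j \delta_j$ and $\sum_{k\ge j}Y_k \ge \sum_{k>j}\delta_k$ for every $j$, one has $\sum_j 2^{j+1}Y_j^2 \ge c^{-1}\sum_j 2^{j+1}\delta_j^2$.

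The main obstacle is this combinatorial inequality; everything else uses only Cauchy--Schwarz and the structural properties of $\hat{f}^{(J)}$ already established. My plan is a weighted charging argument. For each $j \ge 1$, the tail constraint at index $j-1$ gives $\sum_{k\ge j-1} Y_k \ge \delta_j$, and Cauchy--Schwarz with weights $2^{-k-1}$ then yields the local bound $\sum_{k\ge j-1} 2^{k+1}Y_k^2 \ge 2^{j-1}\delta_j^2$, while the boundary term $2\delta_0^2$ is absorbed by the global bound $\sum_k 2^{k+1}Y_k^2 \ge (\sum_k Y_k)^2 = f(\vecv)^2 \ge \delta_0^2$. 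Summing the local bounds against coefficients $a_j$ gives $(\sum_{j\le k+1}a_j)\sum_k 2^{k+1}Y_k^2 \ge \sum_j a_j\,2^{j-1}\delta_j^2$, and the delicate step is to choose $a_j$ so that the multiplicity $\sum_{j\le k+1}a_j$ stays uniformly bounded in $k$ while the right-hand side still dominates a constant fraction of $\sum_j 2^{j+1}\delta_j^2$. If no single weighting covers all profiles of $(\delta_j)$, the fallback is an exchange argument that perturbs an optimal feasible $(Y_j)$ toward the J choice $Y_j = \delta_j$ and shows that each step preserves feasibility while changing $\sum_j 2^{j+1}Y_j^2$ by at most a constant factor, thereby reducing to the trivial case $Y = \delta$.
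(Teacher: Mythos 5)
Your setup is clean and correct: the identity $\int_0^1(\hat{f}^{(J)})^2\,du=\sum_j 2^{j+1}\delta_j^2$ follows from invariant \eqref{inv1}, the bound $\int_0^1(\hat{f}')^2\,du\ge\sum_j 2^{j+1}Y_j^2$ is a correct application of Cauchy--Schwarz, and the extracted constraints $\sum_{k\ge j}Y_k\ge f(\vecv)-\alpha_j=\sum_{k>j}\delta_k$ do follow from Lemma~\ref{nonneg:lemma} at dyadic $\rho$. So the reduction to the quadratic inequality ``$\sum_j2^{j+1}Y_j^2\ge c^{-1}\sum_j2^{j+1}\delta_j^2$ over the feasible polytope'' is a correct (and rather elegant) reformulation, and if it were proved it would yield competitiveness without invoking the $\vecv$-optimal estimator at all.

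But the combinatorial inequality itself is exactly where the proposal stops, and the two routes you sketch do not close it. The weighted-charging route fails for the reason you yourself identify: the local bound $\sum_{k\ge j-1}2^{k+1}Y_k^2\ge 2^{j-1}\delta_j^2$ lower bounds a \emph{tail} of the objective, so any choice of coefficients $a_j$ large enough that $\sum_j a_j 2^{j-1}\delta_j^2\gtrsim\sum_j 2^{j+1}\delta_j^2$ for all profiles forces $a_j\gtrsim 1$, and then the multiplicity $\sum_{j\le k+1}a_j$ grows linearly in $k$ with no uniform bound. This is not a small technicality: tail-sum lower bounds simply do not localize, so no fixed weighting can work. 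The ``exchange argument'' fallback is a single sentence; it does not say what an exchange step is, why it preserves feasibility, or why the total multiplicative loss over an unbounded number of steps stays $O(1)$ --- naively the constants compound.

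What the paper has and your discretization discards is precisely the structure that makes the argument localize. The paper compares $\hat{f}^{(J)}$ not to an arbitrary $\hat{f}'$ but to the $\vecv$-optimal estimator $\hat{f}^{(\vecv)}(u)=-dH^{(\vecv)}_f/du$, which is monotone nonincreasing (negated derivative of a convex function) and satisfies the ``touching'' identity $\hat{f}^{(\vecv)}(\rho)\ge\bigl(\underline{f}^{(\vecv)}(\rho)-\int_\rho^1\hat{f}^{(\vecv)}\bigr)/\rho$ from \eqref{condoptv}. These two facts let the paper bound the optimal estimator's squared contribution on a \emph{window} of three consecutive dyadic intervals $(\rho/2,4\rho)$ from below by $\Theta\bigl(\rho^{-1}(\underline{f}(\rho)-\underline{f}(4\rho))^2\bigr)$, after which the sum over $\rho=2^{-j}$ has bounded overlap (factor $3$). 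Once you pass to arbitrary nonnegative $Y_j$, monotonicity and touching are gone, and the window-localization argument no longer applies --- which is exactly why your tails refuse to localize. To rescue the proposal you would need a discrete analogue of this structure, e.g.\ showing the minimizer of the QP $\min\{\sum 2^{j+1}Y_j^2: Y\ge 0,\ \sum_{k\ge j}Y_k\ge\sum_{k>j}\delta_k\}$ is monotone nonincreasing and ``touches'' the tail constraint between any two distinct levels; that is essentially a discrete version of Theorem~\ref{precoptcompletion}, and proving it is not easier than the path the paper takes. As written, the proposal has a genuine gap at its central step.
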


\begin{proof}
We will show that
$$\forall \vecv,\, \int_0^1 \bigg(\hat{f}^{(J)}(u,\vecv)\bigg)^2du \leq 84 \int_0^1 \bigg(\hat{f}^{(\vecv)}(u)\bigg)^2du .$$

Let $\rho=2^{-j}$ for some integer $j\geq 0$. Recall the construction of $\hat{f}^{(J)}$ on an interval
$(\rho/2,\rho]$.   The value is fixed in the interval and is either
$0$, if $\int_\rho^1  \hat{f}^{(J)}(u,\vecv)du=\underline{f}^{(\vecv)}(\rho)$ or is
$2\frac{\underline{f}^{(\vecv)}(\rho)-\int_\rho^1
  \hat{f}^{(J)}(u,\vecv)du}{\rho}$.
Using \eqref{geJ} in Lemma~\ref{Jprop:lemma},
we obtain that for $u\in (\rho/2,\rho]$,
\begin{align*}
\hat{f}^{(J)}(u,\vecv) &\leq 2\frac{\underline{f}^{(\vecv)}(\rho)-\int_\rho^1  \hat{f}^{(J)}(u,\vecv)du}{\rho}\notinproc{\\
&}\leq 2\frac{\underline{f}^{(\vecv)}(\rho)-\underline{f}^{(\vecv)}(4\rho)}{\rho}
\end{align*}
Thus,
\begin{align}
\int_{\rho/2}^\rho \bigg( \hat{f}^{(J)}(u,\vecv)\bigg)^2 du &\leq \frac{\rho}{2}
\frac{4}{\rho^2}\bigg(\underline{f}^{(\vecv)}(\rho)-\underline{f}^{(\vecv)}(4\rho)\bigg)^2\notinproc{\nonumber\\
&}= \frac{2}{\rho}\bigg(\underline{f}^{(\vecv)}(\rho)-\underline{f}^{(\vecv)}(4\rho)\bigg)^2\label{hatfsquarebound}
\end{align}

We now bound the expectation of $\hat{f}^{(\vecv)}$ on $u\in
(\rho/2,4\rho)$  from below.
{\small
\begin{align}
\lefteqn{\int_{\rho/2}^{4\rho} \hat{f}^{(\vecv)}(u)du = 
\int_{\rho}^{4\rho} \hat{f}^{(\vecv)}(u)du +
\int_{\rho/2}^{\rho} \hat{f}^{(\vecv)}(u)du } \nonumber \\
\geq& \int_{\rho}^{4\rho} \hat{f}^{(\vecv)}(u)du +
\frac{\rho}{2}\hat{f}^{(\vecv)}(\rho) \label{jcompl3}\\
\geq& \int_{\rho}^{4\rho} \hat{f}^{(\vecv)}(u)du +
\frac{\rho}{2}\frac{\underline{f}^{(\vecv)}(\rho)-\int_\rho^1 \hat{f}^{(\vecv)}(u)du}{\rho} \label{jcompl4}\\
=& \int_{\rho}^{4\rho} \hat{f}^{(\vecv)}(u)du +
+ \frac{1}{2}\bigg(\underline{f}^{(\vecv)}(\rho)-\int_\rho^{4\rho} \hat{f}^{(\vecv)}(u)du-\int_{4\rho}^1 \hat{f}^{(\vecv)}(u)du\bigg) \nonumber\\
=& \frac{1}{2}\int_{\rho}^{4\rho}
\hat{f}^{(\vecv)}(u)du+\frac{1}{2}\bigg(\underline{f}^{(\vecv)}(\rho)-\int_{4\rho}^1
\hat{f}^{(\vecv)}(u)du\bigg) 
\geq \frac{1}{2}\bigg(\underline{f}^{(\vecv)}(\rho)-\underline{f}^{(\vecv)}(4\rho)\bigg) \label{jcompl9}
\end{align}
}
Inequality \eqref{jcompl3} follows from monotonicity of the function $\hat{f}^{(\vecv)}$.
Inequality \eqref{jcompl4}  from the definition of $\hat{f}^{(\vecv)}$ as the negated derivative of the lower hull of
  $\underline{f}^{(\vecv)}$  $\hat{f}^{(\vecv)}(\rho) \geq
  \frac{\underline{f}^{(\vecv)}(\rho)-\int_\rho^1
    \hat{f}^{(\vecv)}(u)du}{\rho}$ More precisely, we can use the
explicit definition of the $\vecv$-optimal estimates 
\eqref{condoptv}  provided in
  Section~\ref{vvaropt:sec}, 
$$\hat{f}^{(\vecv)}(\rho) = \inf_{0\leq \eta < \rho}
\frac{\underline{f}^{(\vecv)}(\eta)-\int_{\rho}^1 \hat{f}(u,\vecv)
  du}{\rho-\eta} \geq\inf_{0\leq \eta < \rho} \frac{\underline{f}^{(\vecv)}(\rho)-\int_{\rho}^1 \hat{f}(u,\vecv)
  du}{\rho-\eta}=\frac{\underline{f}^{(\vecv)}(\rho)-\int_{\rho}^1 \hat{f}(u,\vecv)
  du}{\rho}\ .$$  Lastly, inequality \eqref{jcompl9}
uses 
$\int_{4\rho}^1 \hat{f}^{(\vecv)}(u)du
\leq \underline{f}^{(\vecv)}(4\rho)$, which follows from
nonnegativity of $\hat{f}^{(\vecv)}$ and Lemma \ref{nonneg:lemma}.

Dividing both side by $3.5\rho$ we obtain a
lower bound
on  the {\em average} value of $\hat{f}^{(\vecv)}(u)$  in the
 interval $[\rho/2,4\rho]$.
\begin{align}\label{expfvintLB}
 &\frac{1}{3.5\rho} \int_{\rho/2}^{4\rho} \hat{f}^{(\vecv)}(u)du
\geq \frac{1}{7\rho}\bigg(\underline{f}^{(\vecv)}(\rho,\vecv)-\underline{f}^{(\vecv)}(4\rho)\bigg)
\end{align}

We next show that the value $\hat{f}^{(J)}(u,\vecv)$ on $u\in (\rho/2,\rho]$ is at most some
constant times the expected value of  the square of $\hat{f}^{(\vecv)}$ on $u\in
(\rho/2,4\rho)$.  
{\small
\begin{align}
& \int_{\rho/2}^{4\rho} \hat{f}^{(\vecv)}(u)^2du \geq 
\int_{\rho/2}^{4\rho} \left( \frac{1}{3.5\rho} 
\int_{\rho/2}^{4\rho}
\hat{f}^{(\vecv)}(u)du \right)^2 du \geq \label{ex2eqexs} \\
& \int_{\rho/2}^{4\rho} \left(
\frac{1}{7\rho}\bigg(\underline{f}^{(\vecv)}(\rho)-\underline{f}^{(\vecv)}(4\rho)\bigg) \right)^2 du \geq \label{pexpfvintLB}\\
& 3.5\rho\bigg(\frac{1}{7\rho}\bigg)^2\bigg(\underline{f}^{(\vecv)}(\rho)-\underline{f}^{(\vecv)}(4\rho)\bigg)^2 \geq\label{lastex2}\\
&  \frac{1}{28} \int_{\rho/2}^{\rho} \hat{f}^{(J)}(u,\vecv)^2du\nonumber
\end{align}}
Inequality \eqref{ex2eqexs}
uses the fact that for any random variable $X$,
$(\E[X])^2 \leq \E[X^2]$ applied to $\hat{f}^{(\vecv)}$ for $u\in
(\rho/2,4\rho]$.  Inequality \eqref{pexpfvintLB} follows from
\eqref{expfvintLB}.  Lastly, Inequality \eqref{lastex2} follows from \eqref{hatfsquarebound}.
We obtain
\begin{align*}
\int_0^1 \hat{f}^{(J)}(u)^2 du &= \sum_{i=0}^\infty \int_{2^{-i-1}}^{2^{-i}}
    \hat{f}^{(J)}(u)^2 du \\
&\leq 28 \sum_{i=0}^\infty \int_{2^{-i-1}}^{\min\{1,2^{-i+2}\}}
\hat{f}^{(\vecv)}(u)^2 du 
\leq 28\cdot 3 \int_0^1 \hat{f}^{(\vecv)}(u)^2  du
\end{align*}
\onlyinproc{\qed}
\end{proof}

\section{Proof of Theorem \ref{charact:thm}} \label{charactthmproof}

\vspace{0.05in}

\begin{proof}
{\bf $\bullet$  ``$\Rightarrow$''  \eqref{nec_req}:}
From Lemma~\ref{nonneg:lemma},
an unbiased and nonnegative estimator $\hat{f}$ must satisfy \eqref{nonneg:eq}.
Fixing $\vecv$ in \eqref{nonneg:eq} and taking the limit as
$\rho\rightarrow 0$ we obtain that 
$\E[\hat{f} | \vecv] = \int_0^1 \hat{f}(u,\vecv) du \leq
\lim_{u\rightarrow 0} \underline{f}^{(\vecv)} (u,\vecv)$. Combining with unbiasedness:
$\E[\hat{f} | \vecv]  =  f(\vecv)$ we obtain \eqref{nec_req}.

\smallskip
\noindent
{\bf $\bullet$ ``$\Leftarrow$'' \eqref{nec_req}}:
Follows immediately from Lemma~\ref{Jprop:lemma}.

\smallskip
\noindent
{\bf  $\bullet$ ``$\Rightarrow$''  \eqref{bounded_shared_nec_req}:} 
We bound from below
 the contribution to the expectation of unbiased and nonnegative 
$\hat{f}$ of outcomes $S(u,\vecv)$ for
 $u\leq \rho$:
$\int_0^\rho
\hat{f}(u,\vecv)=\int_0^1
\hat{f}(u,\vecv)- \int_\rho^1 \hat{f}(u,\vecv)\geq f(\vecv)-\underline{f}^{(\vecv)}(u)$.
The last inequality follows from unbiasedness and nonnegativity \eqref{nonneg:eq}.
Hence, the average value $\hat{f}(u,\vecv)$ when $u< \rho$ must be at least
$\frac{f(\vecv)-\underline{f}^{(\vecv)}(\rho)}{\rho}$,
and thus, considering all possible values of $\rho>0$,  we obtain that
$\hat{f}$ can be bounded only if  it satisfies \eqref{bounded_shared_nec_req}.

\smallskip
\noindent
{\bf $\bullet$  ``$\Leftarrow$''  \eqref{bounded_shared_nec_req}}: 
Note that \eqref{bounded_shared_nec_req} $\implies$
\eqref{nec_req}, and
therefore the conditions of Lemma~\ref{Jprop:lemma} are satisfied and
the J estimator is well defined, nonnegative,  and unbiased.  It
remains to show that given \eqref{bounded_shared_nec_req}, or the
equivalent statement
\begin{equation} \label{eqbounded_shared_nec_req}
\forall \vecv \, \exists c<\infty \, \forall u,\, f(v)-\underline{f}^{(\vecv)}(u)\leq
cu\ ,
\end{equation}
the J estimator is bounded.
Fix $\vecv$ and let $c$ be as in \eqref{eqbounded_shared_nec_req}.
 \begin{align}
 \hat{f}^{(J)}(\rho,\vecv) &\leq  2\frac{\underline{f}^{(\vecv)}(\rho/2)-
   \int_{2\rho}^1 \hat{f}^{(J)}(u,\vecv)du}{\rho}  \label{bsn1:ineq} \\ 
&\leq  2\frac{f(\vecv)- \underline{f}^{(\vecv)}(8\rho)}{\rho}  \label{bsn2:ineq} \\
&= 16 \frac{f(\vecv)- \underline{f}^{(\vecv)}(8\rho)}{8\rho} 
\leq
16 c \label{bsn3:ineq}
 \end{align}
Inequality \eqref{bsn1:ineq} is from the definition of the J
estimator.
Inequality \eqref{bsn2:ineq} uses definition of the lower bound
function and \eqref{geJ}.  Lastly, \eqref{bsn3:ineq} follows from our
assumption \eqref{eqbounded_shared_nec_req}.

\smallskip
\noindent
{\bf $\bullet$ ``$\iff$'' \eqref{bounded_var_shared_nec_req}}:
From Theorem~\ref{voptlh}, for all $\vecv$, 
\eqref{bounded_var_shared_nec_req}, which is square-integrability of $\hat{f}^{(\vecv)}(u)$,
 is {\em necessary} for existence of a nonnegative unbiased  estimator with
finite variance for $\vecv$.  Sufficiency follows from the proof of
Theorem~\ref{competitivehatf}, which shows that for all $\vecv$, the
expectation of the square of
the J estimator is at most a constant times the minimum possible,  and
\eqref{ccompvar}, which states that the variance is bounded if and only
if the expectation of the square is bounded. \onlyinproc{\qed}
\end{proof}

\subsection*{Conclusion}
We developed a precise understanding 
of the queries we can estimate accurately over coordinated shared-seed
samples.  We
defined variance competitiveness, and showed that $O(1)$
competitiveness is always attainable when estimators with finite
variances exist.
Our work uses a fresh, CS-inspired, and unified approach to the study of  estimators that is
particularly suitable for data analysis from samples and sets the
ground for continued work.

In a follow up work \cite{CKsorder:2012}, aiming for
good performance in practice, we
seek estimators that are  {\em variance optimal} ({\em admissible}
  \cite{surveysampling2:book}), that is, can not
be strictly improved.   
We show that there is
a range of variance optimal and competitive estimators and that this
choice can be leveraged to obtain 
natural estimators with desired properties or customized estimators to
patterns in our data:
 The L$^*$ estimator, which is defined for any function for which a
 nonnegative unbiased estimator exists,   is the
unique monotone\footnote{Estimate value is non-increasing with
  information}  variance-optimal estimator and
is guaranteed to have a ratio of at most $4$.   Customization
is facilitated by deriving {\em order-optimal
estimators} which optimize variances with respect to (any) 
specified  priorities over the data domain. The L$^*$ estimator
prioritizes data on which the estimated function is smaller 
whereas the U$^*$ estimator prioritizes large values.
We demonstrate the potential for applications
in~\cite{CK_sdiff_arxiv:2012},
for $L_p$ difference estimation and in
\cite{CDFGGW:COSN2013}, for 
 sketch-based similarity estimation in massive graphs.

  A natural question is to bound the best possible competitive ratio.
That is, the supremum over instances (data domain, shared-seed
sampling scheme, and function) for which unbiased nonnegative estimators with finite
variances exist of the best ratio attainable for the instance.
The L$^*$ estimator in \cite{CKsorder:2012} gives an upper bound of
$4$ and we were able to provide a tighter upper bound of $3.375$
and a lower bound of $1.44$.  An interesting question is to close this gap.

In the future, we hope  to build on our understanding of coordinated
sampling to extend our treatment of independent sampling in \cite{CK:pods11}. 
We envision automated tools that provide estimates according
to specifications of the sampling scheme, query, competitive ratio, 
and prioritization of patterns in the data.

\ignore{
 Our work is driven by the
 prevalent use of sampling as synopsis of large data sets,
in~\cite{CK_sdiff_arxiv:2012},
 we express the L$^*$ and U$^*$
estimators \cite{CKsorder:2012} for the $L_p$ difference
(exponentiated range functions $\range_p$ ($p>0$)).
We experimentally show that the
 $L_1$ and $L_2$ estimators over PPS (and
priority) samples of various data  sets are accurate
even when a small fraction of the
 data set is sampled.   Prior to our work, there were no good
estimators for L$_p$ differences over coordinated samples for any
 $p\not=1$ and only a weaker
estimator was known  for $L_1$ \cite{multiw:VLDB2009}.
In \cite{CDFGG:COSN2013} we applied the
L$^*$ estimator is for sketch-based similarity estimation in massive graphs 
}


\ignore{
Using Theorem~\ref{competitivehatf} ,  we obtain that
$\hat{f}^{(J)}$ has
variance that is not ``too far off''  the minimum possible,  for all vectors.
\begin{corollary}  \label{LBoptvarbounds}
There is a constant $C$ such that
$$\forall \vecv,\  \var[\hat{f} | \vecv] \leq C
\var[\hat{f}^{(\vecv)} | \vecv] +(C-1)f(\vecv)^2$$
\end{corollary}
\begin{proof}
We apply \eqref{var2moment} and obtain
\begin{align*}
\var[\hat{f} | \vecv]&=\int_0^1 \hat{f}(u,\vecv)^2 du -f(\vecv)^2\\
&\leq  C \int_0^1 \hat{f}^{(\vecv)}(u)^2 du -f(\vecv)^2\\
& = C \var[\hat{f}^{(\vecv)} | \vecv] + (C-1)f(\vecv)^2
\end{align*}
\end{proof}
}




\small
\bibliographystyle{plain}
\bibliography{cycle,replace,p2p,data_structures,varopt}

\newpage
\appendix

\section{Pointwise variance optimality} \label{vvaropt:sec} 

We work with {\em partial specification} of (nonnegative unbiased)
estimators.  A partial specification $\hat{f}$ is defined over
a subset ${\cal S}$ of outcomes that is 
closed under the containment order $V^*(S)$:
\begin{align*}
\forall \vecv\ \exists\rho_v\in [0,1],\ &
S(u,\vecv)\in {\cal S}\, \mbox{{\it a.e.} for }  u>\rho_v\, \wedge\,
\\
& S(u,\vecv)\not\in {\cal S}\, \mbox{{\it a.e.} for }  u\leq \rho_v \ .
\end{align*}
We also require  that  $\hat{f}:{\cal S}\geq 0$, that is, estimates
are nonnegative when specified,  and
that 
\begin{subequations}\label{pdnereq}
\begin{eqnarray}
\forall \vecv,\ \rho_v>0 & \implies & \int_{\rho_v}^1 \hat{f}(u,\vecv)du \leq f(\vecv) \label{condnonneg}\\
\forall \vecv,\ \rho_v=0 & \implies & \int_{\rho_v}^1 \hat{f}(u,\vecv)du = f(\vecv) \label{condunbiased}
\end{eqnarray}
\end{subequations}
If $\rho_v=0$, we say that the estimator is
{\em fully specified} for $\vecv$.

 We show that  a partial specification $\hat{f}$  can always be 
extended to an unbiased nonnegative estimator:
\begin{lemma} \label{completioncoro}
If $f$ satisfies \eqref{nec_req} (has a nonnegative unbiased
estimator) and $\hat{f}$ is partially specified then
it can be extended to an unbiased nonnegative estimator.
\end{lemma}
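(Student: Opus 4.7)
The strategy is to construct the extension on outcomes in ${\cal S}^c$ via a J-style dyadic procedure, designed so each assigned value depends only on outcome-level data, making consistency across data vectors automatic.

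Two preliminary facts are crucial. The first is a monotonicity property: for every $\vecv$ with $\rho_v > 0$ and every $u \geq \rho_v$,
\begin{equation*}
\int_u^1 \hat{f}(u', \vecv) du' \leq \underline{f}^{(\vecv)}(u).
\end{equation*}
To see this, pick any $\vecz \in V^*(u, \vecv)$: the outcomes $S(u', \vecv) = S(u', \vecz) \in {\cal S}$ for $u' > u$ (a positive-measure set) force $\rho_z \leq u$ by the definition of $\rho_z$, so by \eqref{condnonneg} applied to $\vecz$ and nonnegativity of $\hat{f}$ on $(\rho_z, u]$, $\int_u^1 \hat{f}(u', \vecv) du' = \int_u^1 \hat{f}(u', \vecz) du' \leq f(\vecz)$; the infimum over $\vecz$ yields the claim. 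The second is an ``outcome-determines-boundary'' principle: if $\vecz \in V^*(u, \vecv)$ and $u \leq \min(\rho_v, \rho_z)$, then $\rho_v = \rho_z$. Indeed, if $\rho_z < \rho_v$, then for $u' \in (\rho_z, \rho_v]$ the shared outcome $S(u', \vecz) = S(u', \vecv)$ would be simultaneously in ${\cal S}$ (as $u' > \rho_z$) and in ${\cal S}^c$ (as $u' \leq \rho_v$), a contradiction on a positive-measure set.

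I then define $\hat{f}_{ext}$ dyadically. Partition $(0,1]$ into cells $I_j = (2^{-j-1}, 2^{-j}]$, and for each $\vecv$ with $\rho_v > 0$ let $j^*$ be the unique index with $\rho_v \in I_{j^*}$. On the straddling cell set $\hat{f}_{ext}(u, \vecv) = \hat{f}(u, \vecv)$ for $u \in (\rho_v, 2^{-j^*}]$ and $\hat{f}_{ext}(u, \vecv) = c$ for $u \in (2^{-j^*-1}, \rho_v]$, with $c \geq 0$ chosen so that
\begin{equation*}
\int_{2^{-j^*-1}}^1 \hat{f}_{ext}(u', \vecv) du' = \underline{f}^{(\vecv)}(2^{-j^*})
\end{equation*}
(or $c = 0$ if the specified cumulative already exceeds this bound). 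For each subsequent cell $I_j$ with $j > j^*$ (entirely in $(0, \rho_v]$), apply the J procedure of Lemma~\ref{Jprop:lemma}, preserving the invariant $\int_{2^{-j-1}}^1 \hat{f}_{ext}(u', \vecv) du' = \underline{f}^{(\vecv)}(2^{-j})$.

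Finally I verify the three required properties. Consistency: the value assigned on each cell depends only on $\underline{f}^{(\vecv)}(2^{-j})$ (a function of $V^*(2^{-j}, \vecv)$) and the cumulative above $2^{-j}$ (inductively a function of the outcome at seed $2^{-j}$), while the outcome-determines-boundary principle ensures the location of the straddling cell is the same across all vectors sharing outcomes below the frontier. Nonnegativity on the straddling cell uses the monotonicity bound above, $\alpha(\vecv) \leq \underline{f}^{(\vecv)}(\rho_v) \leq \underline{f}^{(\vecv)}(2^{-j^*})$, to guarantee $c \geq 0$; deeper cells inherit nonnegativity directly from Lemma~\ref{Jprop:lemma}. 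Unbiasedness follows by letting $j \to \infty$ in the invariant: $\int_0^1 \hat{f}_{ext}(u', \vecv) du' = \lim_{j\to\infty} \underline{f}^{(\vecv)}(2^{-j}) = f(\vecv)$ by \eqref{nec_req}. The main obstacle is the straddling cell, and the monotonicity bound is precisely the property that rescues the construction there—without it the specified cumulative could overshoot the local lower bound, precluding any nonnegative extension.
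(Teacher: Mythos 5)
Your proposal is correct and follows essentially the same route as the paper: the paper's proof is a one-line appeal to ``a slight modification of the J estimator construction, in which we start to specify the estimator at $\rho_v$ instead of $1$,'' and your argument is a careful spelling-out of exactly that modification, handling the straddling dyadic cell and verifying consistency, nonnegativity, and unbiasedness. The preliminary monotonicity bound you derive (a partial-specification analogue of Lemma~\ref{nonneg:lemma}) and the outcome-determines-boundary observation are precisely the ingredients needed to make the paper's terse hint rigorous.
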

\begin{proof}
The only if direction follows from Lemma~\ref{nonneg:lemma}.  The ``if'' direction
follows from
a slight modification of the J estimator construction, in which
we ``start'' to specify the estimator at $\rho_v$
instead of $1$. \onlyinproc{\qed}
\end{proof}


\begin{theorem} \label{precoptcompletion}
If $\hat{f}$ is a partially specified estimator, then
an extension of $\hat{f}$
to the outcomes $S(\rho,\vecv)$ for $\rho\in (0,\rho_v]$ is a
partially specified estimator and
minimizes $\var[\hat{f}|\vecv]$ (over all unbiased and nonnegative
estimators which are extensions of the partial specification $\hat{f}$).
\begin{eqnarray}
\lefteqn{\iff\ \text{for $u \in (0,\rho_v]$ {\it a.e.}}}\nonumber \\
\hat{f}(u,\vecv) &=& \inf_{0\leq \eta < u}
\frac{\underline{f}^{(\vecv)}(\eta)-\int_{u}^1 \hat{f}(x,\vecv)
  dx}{u-\eta}\ . \label{condoptv}
\end{eqnarray}
\ignore{
\begin{align} \label{condoptv2}
\iff \forall \rho\in (0,\rho_v],\ \quad\quad\quad \nonumber\\
\lim_{\eta\rightarrow \rho^{-}} \frac{\int_\eta^\rho \hat{f}(x,\vecv)dx}{\rho-\eta} = \inf_{0\leq \eta < \rho}
\frac{\underline{f}^{(\vecv)}(\eta)-\int_{\rho}^1 \hat{f}(x,\vecv)
  dx}{\rho-\eta}\ .
\end{align}
}
Moreover, a solution $\hat{f}$ of \eqref{condoptv} always exists and
$H(u)=\int_u^1 \hat{f}(x,\vecv) dx$ for $u\in (0,\rho_v]$ is unique.
\end{theorem}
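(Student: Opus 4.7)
The plan is to recast the problem in terms of the cumulative tail function $H(u) := \int_u^1 \hat{f}(x,\vecv)\,dx$ and identify the $\vecv$-optimal extension as the greatest convex minorant of $\underline{f}^{(\vecv)}$ that matches the boundary data dictated by the partial specification.

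First I would fix $\vecv$ and translate the problem of extending $\hat{f}$ to $(0,\rho_v]$ into an optimization problem for $H$ on $[0,\rho_v]$. Since $\hat{f} = -H'$ a.e., nonnegativity of $\hat{f}$ is equivalent to $H$ being non-increasing; the value $H(\rho_v)$ is pinned down by the partial specification, while unbiasedness combined with \eqref{nec_req} forces $H(0^+) = f(\vecv)$; finally, applying Lemma~\ref{nonneg:lemma} to every $\vecz \in V^*(u,\vecv)$ yields the pointwise constraint $H(u) \leq \underline{f}^{(\vecv)}(u)$. Conversely, Lemma~\ref{completioncoro} shows that every non-increasing absolutely continuous $H$ satisfying these conditions arises from some nonnegative unbiased extension. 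Using \eqref{var2moment}, minimizing $\var[\hat{f}\mid\vecv]$ is then equivalent to minimizing $J(H) := \int_0^{\rho_v} H'(u)^2\,du$ over this admissible class.

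Next I would address the variational characterization. The functional $J$ is strictly convex on the convex admissible class of $H$, so uniqueness of the optimal $H$ is immediate. For the characterization itself, the key tool is Jensen's inequality: if $H$ is optimal and on some subinterval $[a,b] \subseteq [0,\rho_v]$ the upper constraint is strictly slack at every interior point, then replacing $H$ on $[a,b]$ by the linear secant through $(a,H(a))$ and $(b,H(b))$ yields an admissible function whose $J$-value is at most as large, with equality only if $H$ was already affine there. Consequently, the optimal $H$ must be the greatest convex function below $\underline{f}^{(\vecv)}$ on $[0,\rho_v]$ interpolating the boundary data. The geometric content of ``greatest convex minorant'' is exactly \eqref{condoptv}: at any $u$, $-H'(u)$ is the slope of the tightest support chord leaving $(u,H(u))$ backwards, i.e.\ $\inf_{0\leq \eta < u}\frac{\underline{f}^{(\vecv)}(\eta)-H(u)}{u-\eta}$, since this infimum singles out the least steep descent that keeps $H$ weakly below $\underline{f}^{(\vecv)}$ while remaining convex. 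Uniqueness of the optimum then yields the iff stated in the theorem.

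For existence, I would construct $H$ explicitly by taking the lower convex hull of $\underline{f}^{(\vecv)}$ on $[0,\rho_v]$, anchored to meet the prescribed endpoint $(\rho_v,H(\rho_v))$. Left-continuity and monotonicity of $\underline{f}^{(\vecv)}$ (Corollary~\ref{leftcon:lemma}) ensure the hull is well-defined and continuous, and any convex function is differentiable almost everywhere, producing $\hat{f} = -H'$ a.e. The main obstacle will be the measure-theoretic care required near jumps of $\underline{f}^{(\vecv)}$ and on intervals where the constraint binds on a set of positive measure; I would handle these by working with one-sided (say right) derivatives and subdifferentials, and by verifying \eqref{condoptv} only a.e., which suffices because two a.e.-solutions give the same cumulative function $H$.
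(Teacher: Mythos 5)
Your proposal follows the convex-analytic route that the paper itself explicitly acknowledges and then declines (``Geometrically, an extension $\hat{f}$ satisfies \ldots if and only if $H(u)=\int_u^1\hat{f}(x,\vecv)dx$ is the lower boundary of the convex hull\ldots. We provide an algebraic proof.''). The paper opts instead for a structural analysis that classifies points as ``LB points'' and ``gap points,'' partitions $(0,\rho_v]$ into LB and gap subintervals, and proves existence, uniqueness, and optimality by an explicit greedy extension together with a weight-shifting argument (monotonicity a.e. plus the ``touch the lower bound'' condition). Your variational framing is arguably cleaner: uniqueness from strict convexity of $J(H)=\int_0^{\rho_v}H'(u)^2\,du$ over a convex feasible set is immediate, while the paper has to prove uniqueness by a separate contradiction argument. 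You also correctly isolate that the feasible set is cut out by $H$ non-increasing, $H\le\underline f^{(\vecv)}$, and the two boundary conditions, and that Lemma~\ref{completioncoro} supplies the completion to a full estimator once the partial specification is valid.

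There is, however, a genuine gap in your central optimality step. You argue that if the obstacle constraint is strictly slack on the interior of $[a,b]$ you may replace $H$ by the secant through $(a,H(a))$ and $(b,H(b))$ and invoke Jensen. But strict slackness of $H$ at interior points does \emph{not} imply the secant itself stays below $\underline f^{(\vecv)}$: the secant lies above $H$ wherever $H$ is concave, and $\underline f^{(\vecv)}$ is merely left-continuous and non-increasing, so the margin $\underline f^{(\vecv)}-H$ need not be bounded away from zero on $[a,b]$ (left-continuity alone does not guarantee the infimum is attained). A correct version of the perturbation either (i) replaces $H|_{[a,b]}$ by the greatest convex minorant of $\underline f^{(\vecv)}|_{[a,b]}$ matching the endpoint values --- which is admissible by construction and lowers $J$ by the same rearrangement inequality --- or (ii) bypasses local surgery entirely and uses the first-order variational inequality: for the candidate optimum $H^\ast$ and any admissible $H_1$, $\int (H^\ast)'(H_1-H^\ast)'\,du = -\int (H_1-H^\ast)\,d(H^\ast)''\ge 0$ because $(H^\ast)''\ge 0$ is supported where the constraint is active and $H_1-H^\ast\le 0$ there. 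Either fix restores rigor. You should also spell out the deduction that the GCM characterization coincides with \eqref{condoptv}: the paper devotes the LB/gap subinterval machinery precisely to that equivalence (and to handling the fact that $\underline f^{(\vecv)}$ is only left-continuous, so the hull touches the \emph{right limit} at jumps), whereas you assert it as ``exactly'' the geometric content. The identification is true but not one-line.
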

\begin{proof}
\ignore{The equivalence
\eqref{condoptv2} $\iff$ \eqref{condoptv} follows from
Lemma \ref{lebesguediff}, so it suffices to
establish the claim for \eqref{condoptv}.}


From Lemma \ref{completioncoro}  an extended $\hat{f}$ to
 outcomes $S(\eta,\vecv)$ for $\eta\in (0,\rho_v]$ is
a partially specified estimator if and only if
\begin{subequations} \label{nec_opt}
\begin{eqnarray}
\forall u\leq \rho_v\ 
\int_u^1 \hat{f}(x,\vecv)dx \leq \underline{f}^{(\vecv)}(u)\  \label{neg_prec}\\
\int_0^1 \hat{f}(x,\vecv)dx=f(\vecv) \label{unb_prop}
\end{eqnarray}
From the structure of the constraints \eqref{nec_opt},  if we take a
solution $\hat{f}(u,\vecv)$  and shift ``weight'' to lower
$u$ values (decreasing $\hat{f}$ on higher $u$ values and increasing on
lower ones so that the total contribution is the same) then the
result is also a solution.  If the shift results in decrease of higher
values and increase of lower values, then the variance decreases.  Therefore, a
minimum variance solution of \eqref{nec_opt} must be monotone
nonincreasing with $u$ almost everywhere, meaning, that monotonicity
holds if we exclude
from $(0,\rho_v]$ a set of points with zero measure.
\begin{align}
\exists \text{zero measure $U^0\subset (0,\rho_v]$}  \forall u_1,u_2\in
(0,\rho_v]\setminus U_0, \nonumber \\
  u_1<u_2 \implies \hat{f}(u_1,\vecv) \geq \hat{f}(u_2,\vecv) \label{monotae}
\end{align}

Looking now only at extensions satisfying
\eqref{neg_prec}, \eqref{unb_prop}, and \eqref{monotae}, we can decrease
variance if we could "shift weight''
from the higher estimates on
lower $u$ values to the lower estimates on higher $u$ values.   Such shifts,
however, may violate \eqref{neg_prec}.  Therefore, an estimator has minimum
variance only if all such shifts result in violation,
that is, after excluding a zero measure set, between
any two points 
with strictly different estimate values there must be
a point $\eta$ such that $\int_\eta^1 \hat{f}(u,\vecv)du$
``touches'' the lower bound at $\eta$.
Formally,
\begin{eqnarray}
\exists \text{ zero measure $U^0\subset (0,\rho_v]$}  \forall
u_1,u_2\in (0,\rho_v]\setminus U_0 \nonumber \\
  u_1< u_2  \wedge \hat{f}(u_1,\vecv)>
\hat{f}(u_2,\vecv) & \implies &\nonumber\\
 \exists \eta\in (u_1,u_2],
\int_\eta^1 \hat{f}(u,\vecv)du = \lim_{u\rightarrow \eta^+} \underline{f}^{(\vecv)}(u)\ .&& \label{touchLB}
\end{eqnarray}
\end{subequations}


\ignore{
An equivalent requirement to \eqref{neg_prec} is that
for all $\eta<\rho\leq \rho_v$, the expectation of $\hat{f}(u,\vecv)$ for $u\in
(\eta,\rho]$ does not exceed the ratio
$\frac{\underline{f}^{(\vecv)}(\eta) - \int_\rho^1 \hat{f}(u,\vecv)du}{\rho-\eta}$.
since it satisfies the infimum of these requirements.
}

  We established that \eqref{nec_opt} (\eqref{neg_prec}- \eqref{touchLB})
are {\em necessary} for an extension to have minimum  variance.
We also argued that any extension satisfying \eqref{neg_prec} and
\eqref{unb_prop} can be transformed to one satisfying \eqref{monotae} and
\eqref{touchLB} without increasing its variance.  Therefore, if we can
establish that \eqref{nec_opt} have a unique solution (up to
equivalence), the solution must have minimum variance.


  Geometrically, an extension $\hat{f}$ satisfies either
  \eqref{nec_opt}  or \eqref{condoptv} if and only if 
the function $H(u)=\int_u^1 \hat{f}(x,\vecv) dx$ is 
 the lower boundary of the convex hull of the lower bound function
$\underline{f}^{(\vecv)}(x)$ for $x\in (0,\rho_v]$ and the point
$(\rho_v,\int_{\rho_v}^1 \hat{f}(u,\vecv) du)$.
This implies existence and uniqueness of the solution and also
that it has the desired properties.

  We provide an algebraic proof.  We start by
exploring the structure which an extension of $\hat{f}(u,\vecv)$
  to $u\in (0,\rho_v]$, which satisfies
 \eqref{condoptv}, must have.  We subsequently use the structure
 to establish existence and nonnegativity,  uniqueness ({\em a.e.}),  that it satisfies
\eqref{nec_opt}, and lastly, to show that any
 function that violates \eqref{condoptv}  must also violate
 \eqref{nec_opt}.  The combination of these properties implies the
 claim of the theorem.  


\noindent
 {\bf Structure:}
We say that $\rho$ is an {\em LB (lower bound) point} of
$\hat{f}(u,\vecv)$
if the infimum of the
solution of \eqref{condoptv} at $\rho$ is achieved as $\eta\rightarrow
\rho ^-$, that is,
$$\hat{f}(\rho,\vecv) = \lim_{\eta\rightarrow
  \rho^-}\frac{\underline{f}^{(\vecv)}(\eta)-\int_\rho^1 \hat{f}(u,\vecv)du}{\rho-\eta}\ .$$
 Otherwise, if the infimum is
approached when  $\eta < \rho-\epsilon$ for some $\epsilon>0$, we
say that $\rho$ is a
 {\em gap point}.
Note that it is possible for $\rho$ to be both an LB and a gap point
if the infimum is approached at multiple places.
If $$\int_\rho^1 \hat{f}(u,\vecv)du < \lim_{\eta\rightarrow \rho^-}
\underline{f}^{(\vecv)}(\eta)=\underline{f}^{(\vecv)}(\rho)\ ,$$ then $\rho$
must be a gap point, and if equality holds, $\rho$ can be either an LB
or a gap point (or both).
The equality follows using Corollary~\ref{leftcon:lemma} (left continuity of
$\underline{f}^{(\vecv)}$).


We use this classification of points to partition
$(0,\rho_v]$ into {\em LB} and {\em gap} subintervals of the form
$(y,\rho]$, that is, open to the left and
closed to the right.

{\em LB subintervals} are maximal subintervals
containing exclusively LB points (which can
double as gap points) that have the form $(y,\rho]$.  
From left continuity and monotonicity of $\underline{f}^{(\vecv)}$,
if $\rho$ is an LB point and not a gap point then
there is some $\epsilon>0$ such that $(x-\epsilon,x]$
are also LB points.  Thus all LB points that are not gap points must
be part of LB subintervals and these subintervals  are open to the left and
closed to the right (the point $y$ is a gap point which may double as an
LB point).  Alternatively, we can identify
LB subintervals as maximal
such that
$$\forall x\in (y,\rho],\ \int_x^1 \hat{f}(u,\vecv)du =
\underline{f}^{(\vecv)}(x)\ .$$

 Gap subintervals are maximal that satisfy
\begin{eqnarray}
&& \forall x\in (y,\rho),\ \int_x^1 \hat{f}(u,\vecv)du < \lim_{u\rightarrow
 x^{+}} \underline{f}^{(\vecv)}(u) \label{midprop} \ .
\end{eqnarray}
Note that a consecutive interval of gap points may
consist of multiple back-to-back gap subintervals.

We can verify that the  boundary points $b\in \{y, \rho\}$ of both LB and gap subintervals
satisfy
\begin{equation}\label{boundarycond}
 \int_b^1 \hat{f}(u,\vecv)du = \lim_{x\rightarrow
 b^+} \underline{f}^{(\vecv)}(x) 
\end{equation}
Visually, LB intervals are segments where $\int_u^1 \hat{f}(x,\vecv)dx$ ``identifies'' with
 the lower bound function whereas gap intervals are linear
segments where it ``skips'' between two points where it touches (in
the sense of limit from the right) the (left-continuous) lower bound function.
\ignore{
 We can verify that the  boundary points $y$ and $\rho$ of LB and gap subintervals
satisfy
\begin{subequations}\label{boundarycond}
\begin{eqnarray}
&& \int_\rho^1 \hat{f}(u,\vecv)du = \lim_{x\rightarrow
 \rho^+} \underline{f}^{(\vecv)}(x) \label{rhoprop} \\
&& y = \arg\min_{x\in [y,\rho)} \int_x^1 \hat{f}(u,\vecv)du = \lim_{u\rightarrow
 x^{+}} \underline{f}^{(\vecv)}(u)\ . \label{yprop}
\end{eqnarray}
\end{subequations}
}
Figure~\ref{LBopt:fig} illustrates a partition of an example lower bound
function and its lower hull into gap and LB subintervals.
From left
to right, there are two gap subintervals, an LB subinterval, a gap subinterval, and finally a trivial LB subinterval (where the LB and estimates are $0$).

\noindent
{\bf Existence, nonnegativity, and \eqref{neg_prec} :}

  For some $0< \rho\leq \rho_v$, let 
$\hat{f}(u,\vecv)$ be an extension which
satisfies \eqref{condoptv} 
and  \eqref{neg_prec}
for all
$u\in (\rho,\rho_v]$.
We show that there exists $y< \rho$ such that the solution can be extended to 
$(y,\rho]$, that is, \eqref{condoptv} 
and  \eqref{neg_prec} are satisfied also for $u\in (y,\rho]$.

Consider the solution
of \eqref{condoptv} at $u=\rho$.
If the infimum is attained at $\eta<\rho$,
let $y$ be the infimum over points $\eta$ that attain
the infimum of
\eqref{condoptv} at $\rho$.  We can extend the
solution of  \eqref{condoptv} to the interval $(y,\rho]$, which is a
gap interval (or the prefix of one).  The solution is fixed throughout
the interval:
\begin{equation}\label{fixedgap}
\forall x\in (y,\rho],\ \hat{f}(x,\vecv)=\frac{\displaystyle{\lim_{\eta\rightarrow
 y^{+}}} \underline{f}^{(\vecv)}(\eta)-\displaystyle{\lim_{\eta\rightarrow \rho^+}}\underline{f}^{(\vecv)}(\eta)}{\rho-y}\ .\end{equation}

 Assume now that
 the infimum of the solution
of \eqref{condoptv} at $\rho$ is attained
as $\eta\rightarrow \rho^{-}$ and not at any
 $\eta<\rho$.
Let  $y$ be the supremum of points $x<\rho$
{\small
\begin{equation}\label{LBchoice}
\inf_{0\leq \eta < x}
\frac{\underline{f}^{(\vecv)}(\eta)-\lim_{u\rightarrow
    x^+}\underline{f}^{(\vecv)}(u)}{x-\eta} <
\lim_{\eta \rightarrow x^-}
\frac{\underline{f}^{(\vecv)}(\eta)-\lim_{u\rightarrow
    x^+}\underline{f}^{(\vecv)}(u)}{x-\eta}
\end{equation}
}
(We allow the rhs to be $+\infty$).
We have $y<\rho$ and can extend the solution to $(y,\rho]$.  The
extension satisfies $\int_x^1 \hat{f}(u,\vecv)du =
\underline{f}^{(\vecv)}(x)$ and $(y,\rho]$  is an LB subinterval or a prefix of one.
The actual estimator values on $(y,\rho]$ are the negated left
derivative of $\underline{f}^{(\vecv)}(u)$:
$$\forall x\in (y,\rho],  \hat{f}(x,\vecv)=\lim_{\eta \rightarrow x^-}
\frac{\underline{f}^{(\vecv)}(\eta)-\underline{f}^{(\vecv)}(x)}{x-\eta}$$


 In both cases (where the extension corresponds to an LB or gap subinterval),
the solution on $(y,\rho]$ exists,
is nonnegative,
and satisfies \eqref{neg_prec} for $u\in (y,\rho]$.

  Thus,
starting from $u=\rho_v$, we can iterate the process
and compute a solution on a suffix of $(0,\rho_v]$
that is partitioned into gap and LB intervals.  The sum of sizes of
intervals, however, may converge to a value $<\rho_v$
and thus the process may not converge to covering $(0,\rho_v]$.
To establish existence, assume to the contrary that there is no
solution that covers $(0,\rho_v]$. Let
$x$ be the infimum such that there are solutions on $(x,\rho_v]$ but
there are no solutions for $(x-\epsilon,\rho_v]$ for all $\epsilon>0$.
Let $\hat{f}(u,\vecv)$ be a solution on $u\in (x,\rho_v]$ and consider its
partition to LB and gap intervals.
Each interval that intersects $[x,x+\epsilon)$ must contain a left boundary point $y_\epsilon$ of
some subinterval (it is possible that $y_\epsilon=x$).
All boundary points satisfy
$\int_{y_\epsilon}^1 \hat{f}(x,\vecv)\leq
\underline{f}^{(\vecv)}(y_{\epsilon})$.
We have
\begin{eqnarray*}
\int_{x}^1 \hat{f}(u,\vecv)du & = &  \lim_{\epsilon\rightarrow
  0} \int_{y_\epsilon}^1 \hat{f}(x,\vecv)\leq \lim_{\epsilon\rightarrow 0} \underline{f}^{(\vecv)}(y_{\epsilon})\\
&=&  \lim_{\eta\rightarrow x^{+}} \underline{f}^{(\vecv)}(\eta) \leq
\underline{f}^{(\vecv)}(x)
\end{eqnarray*}
The last inequality follows from monotonicity of
$\underline{f}^{(\vecv)}(u)$.  Thus, we can apply our extension process
starting from $\rho=x$ and
extend the solution to $(y,x]$ for some $y<x$,
contradicting our assumption and establishing existence.


\noindent
{\bf Uniqueness:}
Assume there are two different solutions and let $x\in (0,\rho_v)$
be the supremum of values on which they differ.  Consider one of the
solutions.
$x$ can not be an interior point of a (gap or LB) subinterval because
the estimator is determined on the subinterval from values on the
right boundary, which is the same for both solutions.  If $x$ is a
right boundary point, $\int_x^{\rho_v} \hat{f}(u,\vecv)du$ is same
for both functions and thus the left boundary and the solution on the
interval are uniquely determined and must be the same for both
functions, contradicting our assumption.


\ignore{
note that it is satisfied at $\rho=\rho_v$ and at points $x$ where
there is equality, $\hat{f}^{(\vecv)}(x,\vecv)$ does not exceed the left partial
derivative of $\underline{f}(x,\vecv)$ (which exists from
left-continuity and monotonicity of $\underline{f}$):
\begin{eqnarray*}
\hat{f}^{(\vecv)}(\rho,\vecv) &= &
\inf_{0\leq \eta<\rho}\frac{\underline{f}(\eta,\vecv')-\underline{f}(\rho,\vecv)}{\rho-\eta}  \\
& \leq & \lim_{\eta\rightarrow \rho^{-}} \frac{\underline{f}(\eta,\vecv')-\underline{f}(\rho,\vecv)}{\rho-\eta} =
-\frac{\partial \underline{f}(x,\vecv)}{\partial x^{-}}
\end{eqnarray*}
}

\noindent
{\bf Satisifes \eqref{unb_prop} (unbiasedness):}
From \eqref{nec_req} (a sufficient necessary condition to
existence of unbiased nonnegative estimator) and \eqref{boundarycond}
$$\int_0^1 \hat{f}(u,\vecv)du = \lim_{u\rightarrow 0^+}
\underline{f}^{(\vecv)}(u)=f(\vecv).$$

\noindent
{\bf Satisifes \eqref{monotae} and  \eqref{touchLB}:}
  Within an LB subinterval the estimator must be
  nonincreasing,
since otherwise we obtain a contradiction to the definition of an LB
point.  The fixed values on two consecutive gap intervals  $(y,x]$ and
$(x,\rho]$ must also
be nonincreasing because otherwise the infimum of the solution at
$\rho$ could not be obtained at $x$ since a strictly lower value is
obtained at $y$.

\noindent
{\bf The  {\em only} solution of \eqref{nec_opt}:}  Let
$\hat{f}^{(\vecv)}(u)\equiv \hat{f}(u,\vecv)$ where $\hat{f}$ is a
solution of \eqref{condoptv}.  Let 
$\hat{f}'$  be another extension  which satisfies
\eqref{neg_prec}  and \eqref{unb_prop} (and thus constitutes
a 
partially specified estimator that is fully specified on $\vecv$)
 but  is not equivalent to $\hat{f}^{(\vecv)}$, that is, for some
$\rho$
$$\int_\rho^{\rho_v} \hat{f}^{(\vecv)}(u)du \not= \int_\rho^{\rho_v}
\hat{f}'(u,\vecv)du \ .$$
We show that $\hat{f}'$ must violate
\eqref{monotae} or  \eqref{touchLB}.

We first show that if
\begin{equation} \label{largecase}
\int_\rho^{\rho_v} \hat{f}^{(\vecv)}(u)du < \int_\rho^{\rho_v}
\hat{f}'(u,\vecv)du
\end{equation}
then monotonicity \eqref{monotae} must be violated.
Let $\overline{\rho}$ be the supremum of points satisfying \eqref{largecase}.
Then $\overline{\rho}$ must be a gap point
and some neighborhood to its left must be gap points.
Let $y< \overline{\rho}$ be the left boundary point of this gap interval.
Recall that
$\hat{f}^{(\vecv)}$ is constant on a gap interval.

 Let $\rho\in(y,\overline{\rho})$ be such that \eqref{largecase} holds.
There must be measurable set in $[\rho,\overline{\rho}]$ on which
$\hat{f}' > \hat{f}^{(\vecv)}$.
Since $y$ is a left boundary of a gap interval, from
\eqref{boundarycond}, 
$$\int_y^1 \hat{f}^{(\vecv)}(u)du=\lim_{u\rightarrow y^+}\underline{f}^{(\vecv)}(u)$$ and
therefore, since $\hat{f}'$ satisfies \eqref{neg_prec},
$\int_y^{\rho_v} \hat{f}'du \leq \int_y^{\rho_v} \hat{f}^{(\vecv)}(u)du$.
Thus, there must exist a measurable subset of
$[y,\rho]$ on which
$\hat{f}'< \hat{f}^{(\vecv)}$.  Since $\hat{f}^{(\vecv)}$ is
constant in $(y,\overline{\rho}]$, $\hat{f}'$ violates \eqref{monotae}.

Lastly, we show that
if
\begin{equation} \label{smallcase}
\int_\rho^{\rho_v} \hat{f}^{(\vecv)}(u)du > \int_\rho^{\rho_v}
\hat{f}'(u,\vecv)du
\end{equation}
then
even
if \eqref{monotae} holds then \eqref{touchLB} is violated.

 There must exist a measurable subset $U\subset [\rho,\rho_v]$ on which
$\hat{f}^{(\vecv)}(u)> \hat{f}'(u,\vecv)$.

 Let $\eta$ be the supremum of points $x\in [0,\rho)$ satisfying
$\int_x^{\rho_v} \hat{f}'(u,\vecv)du=\lim_{u\rightarrow x^+}
\underline{f}^{(\vecv)}(u)$.  Since this is satisfied by $x=0$ and
the supremum can not be satisfied by $\rho$, such $\eta< \rho$ is well defined.

 Since $\hat{f}^{(\vecv)}$ satisfies \eqref{neg_prec},
$\int_\eta^{\rho_v} \hat{f}^{(\vecv)}(u)du \leq
\int_\eta^{\rho_v} \hat{f}'(u,\vecv)du$.
Therefore,
there must be another measurable set $W\subset
[\eta,\rho]$ on which
$\hat{f}^{(\vecv)}(u)< \hat{f}'(u,\vecv)$.
Because $\hat{f}^{(\vecv)}$ is monotone, the values of $\hat{f}'$ on
$W$ are strictly larger than on $U$.  Hence, $\hat{f}'$ violates
\eqref{touchLB}.
\end{proof}


Theorem~\ref{voptlh} follows by applying Theorem~\ref{precoptcompletion}  to
an empty specification.


\section{Using sum estimators with bottom-$k$
  sampling}  \label{sumcase:sec}
\ignore{
In the introduction, we ``reduced''  the estimation of sum aggregates, over Poisson or bottom-$k$ samples,
to estimating item functions $f(\vecv)\geq 0$
where each entry of $\vecv$ is Poisson sampled.  We also argued that 
we want these single-item estimators to be {\em unbiased} and
{\em nonnegative}.   Here we provide further justification, beyond
simplicity, for this
restriction to linear estimators and also provide more details on applying it with
bottom-$k$ sampling of instances.

Beyond simplicity, an important practical advantage of sum estimators is that their
properties are not sensitive to the degree of ``independence'' we have
between the sampling of different items (tuples).  This is important because
higher degrees of independence are more difficult to achieve.
Unbiasedness of the sum estimate holds even when single-item samples
are dependent and the variance relation which implies that the relative error
decreases with aggregation requires only pairwise independence.
 In contrast, the performance of non-linear estimators depends on
the joint distribution of multiple (in the worst case
all) items.  
}
We provide details on the application of sum estimators with bottom-$k$ sampling of instances.
A technical requirement for applying sum estimators is that
the sampling scheme of each item is explicit.
This naturally happens with Poisson sampling
where an entry (item $h$ in instance $i$) is sampled $\iff$
 $r(u(h),v(h)) \geq T_i(h)$.
 With bottom-$k$ sampling, however, item inclusions  are dependent and
 therefore the separation a bit more technical, and relies on a technique
named rank
conditioning~\cite{bottomk:VLDB2008,dlt:pods05}.

Rank conditioning was
applied to estimate subset sums over bottom-$k$ samples of a
single instance, and 
allows us to treat each entry as Poisson sampled.
In rough details, for each entry, we obtain a  ``$T_i(h)$  substitute''  which allows
the inclusion of $h$ in the sample of instance $i$ 
on seeds (and thus ranks) of all other items being
fixed.  With this conditioning,
the inclusion of $h$ in the sample of $i$ is Poisson with threshold
$T_i$, that is, follows the rule $r(u(h),v(h)) \geq T_i$, where $T_i$
is  equal to the $k$th largest rank value of items in 
instance $i$ with $h$
excluded which is the same as the $(k+1)$st largest rank value.

\end{document}